\newtheorem{definition}{Definition}
\newtheorem{theorem}   {Theorem}
\newtheorem{proposition} [theorem]{Proposition}
\newtheorem{lemma}       [theorem]{Lemma}
\definecolor{niceredbright}{HTML}{bd0310}
\definecolor{nicebluebright}{HTML}{197b9b}
\definecolor{nicered}{HTML}{7f0a13}
\definecolor{niceblue}{HTML}{104354}
\definecolor{nicegreen}{HTML}{217516}
\definecolor{nicepurple}{HTML}{884bab}
\definecolor{nicebg}{HTML}{f6f0e4}
\definecolor{niceredlight}{HTML}{c9888d}
\definecolor{nicebluelight}{HTML}{78a4b8}
\definecolor{nicegreenlight}{HTML}{76de68}
\definecolor{nicepurplelight}{HTML}{bc87db}
   \def\@citecolor{niceblue}%
   \def\@urlcolor{niceblue}%
   \def\@linkcolor{nicered}%
\def\orcidID#1{\smash{\href{http://orcid.org/#1}{\protect\raisebox{-1.25pt}{\protect\includegraphics{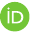}}}}}
\pretocmd\start@gather{%
    \if@minipage\kern-\topskip\kern-\baselineskip\kern+7pt\fi
}{}{}
\newcommand{\Proofsketch}{Proof sketch}
\renewcommand{\parag}{\paragraph*}
\begin{document}
\title{Running Time Analysis of Broadcast Consensus Protocols\thanks{This work was supported by an ERC Advanced Grant (787367: PaVeS) and by the Research Training Network of the Deutsche Forschungsgemeinschaft (DFG) (378803395: ConVeY).}}

\author{Philipp Czerner \orcidID{0000-0002-1786-9592}, Stefan Jaax \orcidID{0000-0001-5789-8091}}
\affil{\{czerner, jaax\}@in.tum.de\\
Department of Informatics, TU München, Germany}
\maketitle              
\begin{abstract}
Broadcast consensus protocols (BCPs) are a model of computation, in which anonymous, identical, finite-state agents compute by sending/receiving global broadcasts. BCPs are known to compute all number predicates in $\NL=\NSPACE(\log n)$ where $n$ is the number of agents. They can be considered an extension of the well-established model of population protocols.
This paper investigates execution time characteristics of BCPs.
We show that every predicate computable by population protocols is computable by a BCP with expected $\mathcal{O}(n \log n)$ interactions, which is asymptotically optimal. We further show that every log-space, randomized Turing machine can be simulated by a BCP with $\mathcal{O}(n \log n \cdot T)$ interactions in expectation, where $T$ is the expected runtime of the Turing machine. This allows us to characterise polynomial-time BCPs as computing exactly the number predicates in $\mathsf{ZPL}$, i.e.\ predicates decidable by log-space bounded randomised Turing machine with zero-error in expected polynomial time where the input is encoded as unary.
\end{abstract}

\section{Introduction}\label{sec:introduction}
In recent years, models of distributed computation following the \emph{computation-by-consensus} paradigm attracted considerable interest in research (see for example \cite{aspnes2009introduction,michail2011mediated,michail2015terminating,aspnes2017clocked,concur19}). In such models, network agents compute number predicates, i.e.\ Boolean-valued functions of the type $\N^k \rightarrow \{0, 1\}$, by reaching a stable consensus whose value determines the outcome of the computation.
Perhaps the most prominent model following this paradigm are \emph{population protocols} \cite{angluin2006computation,angluin2006stably}, a model in which anonymous, identical, finite-state agents interact randomly in pairwise rendezvous to agree on a common Boolean output.

Due to anonymity and locality of interactions, it is an inherent property of population protocols that agents are generally unable to detect with absolute certainty when the computation has stabilized. This makes sequential composition of protocols difficult, and further complicates the implementation of control structures such as loops or branching statements.
To overcome this drawback, two kinds of approaches have been suggested in the literature: 1.) Let agents guess when the computation has stabilized, leading to composable, but merely \emph{approximately correct} protocols \cite{angluin2008fast,popfast}, or 2.) extend population protocols by global communication primitives that enable agents to query global properties of the agent population \cite{concur19,aspnes2017clocked,michail2015terminating}.

Approaches of the first kind are for the most part based on simulations of global broadcasts
by means of \emph{epidemics}. In epidemics-based approaches the spread of the broadcast signal is simulated by random pairwise rendezvous, akin to the spread of a viral epidemic in a population. When the broadcasting agent meets a certain fraction of ``infected'' agents, it may decide with reasonable certainty that the broadcast has propagated throughout the entire population, which then leads to the initiation of the next computation phase. Of course, the decision to start the next phase may be premature, in which case the rest of the execution may be faulty. However, epidemics can also be used to implement phase clocks that help keep the failure probability low (see e.g.\ \cite{angluin2008fast}).

In \cite{concur19}, Blondin, Esparza, and one of the authors of this paper introduced  \emph{broadcast consensus protocols} (BCPs), an extension of population protocols by reliable, global, and atomic broadcasts. BCPs find their precursor in the broadcast protocol model introduced by Emerson and Namjoshi in~\cite{EmersonN98} to describe bus-based hardware protocols. This model has been investigated intensely in the literature, see e.g.~\cite{EsparzaFM99,FinkelL02,DelzannoRB02,SchmitzS13}. Broadcasts also arise naturally in biological systems. For example, Uhlendorf \textit{et al.}\  analyse applications of broadcasts in the form of an external, global light source for controlling a population of yeasts ~\cite{BDGG17}.

The authors of \cite{concur19} show that BCPs compute precisely the predicates in $\NL = \NSPACE(\log n)$, where $n$ is the number of agents. For comparison, it is known that population protocols compute precisely the \emph{Presburger predicates}, which are the predicates definable in the first-order theory of the integers with addition and the usual order -- a class of predicates much less expressive than the former.

An epidemics-based approach was used in \cite{angluin2008fast} to show that population protocols can simulate with high probability a step of a virtual register machine with expected $\mathcal{O}(n \log^5(n))$ interactions, where $n$ is the number of agents. This result stimulated further research into time bounds for classical problems such as leader election (see e.g.\ \cite{gkasieniec2018fast,alistarh2017time,doty2018stable,sudo2020leader,berenbrink2020optimal}) and majority (see e.g.\ \cite{alistarh2015fast,alistarh2018space}). In their seminal paper \cite{angluin2006computation}, Angluin \emph{et al.}\ already showed that population protocols can stably compute Presburger predicates with $\mathcal{O}(n^2 \log n)$ interactions in expectation. Belleville  \emph{et al.} further showed that leaderless protocols require a quadratic number of interactions in expectation to stabilize to the correct output for a wide class of predicates \cite{belleville2017hardness}. The aforementioned bounds apply to \emph{stabilisation time}: the time it takes to go from an initial configuration to a stable consensus that cannot be destroyed by future interactions. In \cite{popfast}, Kosowski and Uznanski considered the weaker notion of \emph{convergence time}: the time it takes on average to ultimately transition to the correct consensus (although this consensus could in principle be destroyed by future interactions), and they show that sublinear convergence time is achievable.

By contrast, to the best of our knowledge, time characteristics of BCPs have not been discussed in the literature. The $\NL$-powerful result presented in \cite{concur19} does not establish any time bounds. In fact, \cite{concur19} only considers a non-probabilistic variant of BCPs with some global fairness assumption in place of probabilistic choices.

\parag{Contributions of the paper}
This paper initiates the runtime analysis of BCPs in terms of expected number of interactions to reach a stable consensus. To simplify the definition of probabilistic execution semantics, we introduce a restricted, deterministic variant of BCPs without rendezvous transitions. In Section~\ref{sec:prelims}, we define probabilistic execution semantics for the restricted version of BCPs, and we provide an introductory example for a fast protocol computing majority in Section~\ref{sec:exmp}.

In Section~\ref{sec:cmp}, we show that these restrictions of our BCP model are inconsequential in terms of expected number of interactions: both rendezvous and nondeterministic choices can be simulated with a constant runtime overhead.

In Section~\ref{sec:fast-presburger}, we show that every Presburger predicate can be computed by BCPs with $\mathcal{O}(n \log n)$ interactions and with constant space, where $n$ denotes the number of agents in the population. This result is asymptotically optimal.

In more generality, in Section~\ref{sec:fast-general}, we use BCPs to simulate Turing machines (TMs). In particular, we show that any randomised, logarithmically space-bound, polynomial-time TM can be simulated by a BCP with an overhead of $\mathcal{O}(n\log n)$ interactions per step. Conversely, any polynomial-time BCP can be simulated by such a TM. This result can be considered an improvement of the $\NL$ bound from \cite{concur19}, now in a probabilistic setting. We also give a corresponding upper bound, which yields the following succinct characterisation: polynomial-time BCPs compute exactly the number predicates in $\mathsf{ZPL}$, which are the languages decidable by randomised log-space polynomial-time TMs with zero-error (the log-space analogue to $\mathsf{ZPP}$).

Bounding the time requires a careful analysis of each step in the simulation of the Turing machine. Thus, our proof diverges in significant ways from the proof establishing the $\NL$ lower bound in \cite{concur19}. Most notably, we now make use of epidemics in order to implement clocks that help reduce failure rates.

\section{Preliminaries}\label{sec:prelims}

\newcommand{\Op}[1]{\mathsf{#1}}

\parag{Complexity classes} As is usual, we define $\NL$ as the class of languages decidable by a nondeterministic log-space TM. Additionally, by $\mathsf{ZPL}$ we denote the set of languages decided by a randomised log-space TM $A$, s.t.\ $A$ only terminates with the correct result (zero-error) and that it terminates within $\mathcal{O}(\operatorname{poly}n)$ steps in expectation, as defined by Nisan in \cite{nisan1993read}.

\parag{Multisets} A \emph{multiset} over a finite set $E$ is a mapping $M \colon E \to \N$. The set of all multisets over $E$ is denoted $\N^E$. For every $e \in E$, $M(e)$ denotes the number of occurrences of $e$ in $M$. We sometimes denote multisets using a set-like notation, \eg $\multiset{f, g, g}$ is the multiset $M$ such that $M(f) = 1$, $M(g) = 2$ and $M(e) = 0$ for every $e \in E \setminus \{f, g\}$.  Addition, comparison and scalar multiplication are extended to multisets componentwise, i.e.\ $(M \mplus M')(e) \defeq M(e) + M'(e)$, $(\lambda M)(e)\defeq\lambda M(e)$ and $M \leq M' \defiff M(e) \leq M'(e)$ for every $M,M'\in\mathbb{N}^Q$, $e \in E$, and $\lambda\in\mathbb{N}$. For $M'\le M$ we also define componentwise subtraction, i.e.\ $(M-M')(e)\defeq M(e)-M'(e)$ for every $e \in E$. For every $e \in E$, we write $\vec{e} \defeq \multiset{e}$. We lift functions $f \colon E\rightarrow E'$ to multisets by defining $f(M)(e')\defeq\sum_{f(e)=e'}M(e)$ for $e'\in E'$. Finally, we define the \emph{support} and \emph{size} of $M \in \N^E$ respectively as $\supp{M} \defeq \{e \in E : M(e) > 0\}$ and $|M| \defeq \sum_{e \in E} M(e)$.

\parag{Broadcast Consensus Protocols} A \emph{broadcast consensus protocol} \cite{concur19} (BCP) is a tuple $\mathcal{P}=(Q,\Sigma,\delta,I,O)$ where
\begin{itemize}
  \item $Q$ is a non-empty, finite set of \emph{states},
  \item $\Sigma$ is a non-empty, finite \emph{input alphabet},
  \item $\delta$ is the \emph{transition function} (defined below),
  \item $I \colon \Sigma \rightarrow Q$ is the \emph{input mapping}, and
  \item $O \subseteq Q$ is a set of \emph{accepting states}.
\end{itemize}

The function $\delta$ maps every state $q \in Q$ to a pair $(r, f)$ consisting of the \emph{successor state} $r \in Q$ and the \emph{response function} $f \colon Q \trans[0pt]{} Q$.

\parag{Configurations}
A \emph{configuration} is a multiset $C\in\N^Q$. Intuitively, a configuration $C$ describes a collection of identical finite-state \emph{agents} with $Q$ as set of states, containing $C(q)$ agents in state $q$ for every
$q \in Q$. We say that $C\in\N^Q$ is a \emph{$1$-consensus} if $\supp{C}\subseteq O$, and a \emph{$0$-consensus} if $\supp{C}\subseteq Q\setminus O$.

\parag{Step relation}
A broadcast $\delta(q)=(r,f)$ is executed in three steps: (1)~an agent at state $q$ broadcasts a signal and leaves $q$; (2)~all other agents receive the signal and move to the states indicated by the function $f$, i.e.\ an agent in state $s$ moves to $f(s)$; and (3)~the broadcasting agent enters state $r$.

Formally, for two configurations $C, C'$ we write $C\trans[0pt]{}C'$, whenever there exists a state $q\in Q$ s.t.\ $C(q)\ge 1$, $\delta(q)=(r,f)$, and $C'=f(C-\vec{q})+\vec{r}$ is the configuration computed from $C$ by the above three steps. By $\trans{*}$ we denote the reflexive-transitive closure of $\trans[0pt]{}$.

For example, consider a configuration $C\defeq\multiset{a,a,b}$ and a broadcast transition $a\mapsto b,\{a\mapsto c, b\mapsto d\}$. To execute this transition, we move an agent from state $a$ to state $b$ and apply the transition function to all other agents, so we end up in $C'\defeq\multiset{b}+\multiset{c,d}$.

\parag{Broadcast transitions}
We write broadcast transitions as $q\mapsto r, S$ with $S$ a set of expressions $q'\mapsto r'$. This refers to $\delta(q)=(r,f)$, with $f(q')=r'$ for $(q'\mapsto r')\in S$. We usually omit identity mappings $q'\mapsto q'$ when specifying $S$.

For graphic representations of broadcast protocols we use a different notation, which separates sending and receiving broadcasts. There we identify a transition $\delta(q)=(r,f)$ with a name $\alpha$ and specify it by writing $q\trans{!\alpha}r$ and $q'\trans{?\alpha}r'$ for $f(q')= r'$. Intuitively, $q'\trans{?\alpha}r'$ can be understood as an agent transitioning from $q'$ to $r'$ upon receiving the signal $\alpha$, and $q\trans{!\alpha}r$ means that an agent in state $q$ may transmit the signal $\alpha$ and simultaneously transition to state $r$.

As defined, $\delta$ is a total function, so each state is associated with a unique broadcast. If we do not specify a transition $\delta(q)=(r,f)$ explicitly, we assume that it simply maps each state to itself, i.e.\ $q\mapsto q,\{r\mapsto r:r\in Q\}$. We refer to those transitions as \emph{silent}.

\parag{Executions}
An \emph{execution} is an infinite sequence $\pi = C_0 C_1 C_2...$ of configurations with $C_i\trans[0pt]{}C_{i+1}$ for every $i$. It has some fixed number of agents $n\defeq|C_0|=|C_1|=...$ . Given a BCP and an initial configuration $C_0\in\N^Q$, we generate a random execution with the following Markov chain: to perform a step at configuration $C_i$, a state $q\in Q$ is picked at random with probability distribution $p(q)=C_i(q)/|C_i|$, and the (uniquely defined) transition $\delta(q)$ is executed, giving the successor configuration $C_{i+1}$. We refer to the random variable corresponding to the trace of this Markov chain as \emph{random execution}.

\parag{Stable Computation}
Let $\pi$ denote an execution and $\infmany(\pi)$ the configurations occurring infinitely often in $\pi$. If $\infmany(\pi)$ contains only $b$-consensuses, we say that $\pi$ \emph{stabilises} to $b$. For a predicate $\varphi:\N^\Sigma\rightarrow\{0,1\}$ we say that $\PP$ \emph{(stably) computes} $\varphi$, if for all inputs $X\in\N^\Sigma$, the random execution of $\PP$ with initial configuration $C_0=I(X)$ stabilises to $\varphi(X)$ with probability $1$.

Finally, for an execution $\pi=C_0 C_1 C_2...$ we let $T_\pi$ denote the smallest $i$ s.t.\ all configurations in $C_iC_{i+1}...$ are $\varphi(X)$-consensuses, or $\infty$ if no such $i$ exists. We say that a BCP $\mathcal{P}$ \textit{computes $\varphi$ within $f(n)$ interactions}, if for all initial configurations $C_0$ with $n$ agents the random execution $\pi$ starting at $C_0$ has $\mathbb{E}(T_\pi)\le f(n)<\infty$, i.e.\ $\mathcal{P}$ stabilises within $f(n)$ steps in expectation.
\footnote{Note that this implies that the BCP computes $\varphi$, as else there is a positive probability that $T_\pi=\infty$, and thus $\mathbb{E}(T_\pi)=\infty$ as well.}
If $f\in\mathcal{O}(\operatorname{poly}(n))$, then we call $\mathcal{P}$ a \emph{polynomial-time} BCP.

\parag{Global States}
Often, it is convenient to have a shared global state between all agents. If, for a BCP $\PP=(Q,\Sigma,\delta,I,O)$ we have $Q=S\times G$, $I(\Sigma)\subseteq Q\times\{j\}$ for some $j\in G$, and $f((s,j))\in Q\times\{j'\}$ for each $\delta((q,j))=((r,j'),f)$, then we say that $\PP$ has \emph{global states $G$}. A configuration $C$ has \emph{global state $j$}, if $\supp{C}\subseteq Q\times\{j\}$ for $j\in G$. Note that, starting from a configuration with global state $j$, $\PP$ can only reach configurations with a global state. Hence for $\PP$ we will generally only consider configurations with a global state. To make our notation more concise, when specifying a transition $\delta(q)=(r,f)$ for $\PP$, we will write $f$ as a mapping from $S$ to $S$, as $q,r$ already determine the mapping of global states.

\parag{Population Protocols} A population protocol \cite{angluin2006computation} replaces broadcasts by local rendezvous. It can be specified as a tuple $(Q, \Sigma, \delta, I, O)$ where $Q$, $\Sigma$, $I$, $O$ are defined as in BCPs, and $\delta \colon Q^2\rightarrow Q^2$ defines \emph{rendezvous transitions}. A step of the protocol at $C$ is made by picking two agents uniformly at random, and applying $\delta$ to their states: first $q_1 \in Q$ is picked with probability $C(q_1)/|C|$, then $q_2 \in Q$ is picked with probability $C'(q_2)/|C'|$, where $C' \defeq C-\multiset{q_1}$. The successor configuration then is $C - \multiset{q_1, q_2} + \multiset{r_1, r_2}$ where $\delta(q_1, q_2) = (r_1, r_2)$.

\parag{Broadcast Protocols} Later on we will construct BCPs out of smaller building blocks which we call \emph{broadcast protocols (BPs)}. A BP is a pair $(Q,\delta)$, where $Q$ and $\delta$ are defined as for BCPs. We extend the applicable definitions from above to BPs, in particular the notions of configurations, executions, and global states.

\section{Example: Majority}\label{sec:exmp}
\newcommand{\frozenx}{\tilde{x}}
\newcommand{\activey}{\tilde{y}}
\newcommand{\send}[1]{!{#1}}
\newcommand{\rec}[1]{?{#1}}
\newcommand{\q}[1]{#1}

\newcommand{\Sx}{\alpha}
\newcommand{\Sy}{\beta}

\begin{figure}
\def\svgwidth{122mm}\import{figures/}{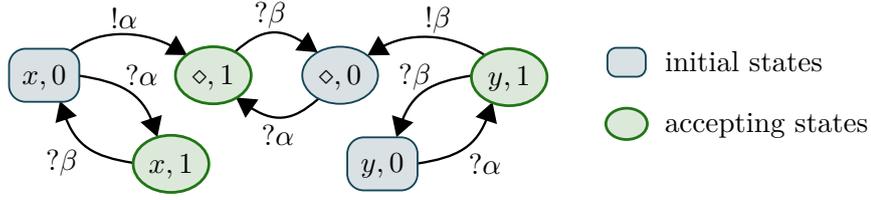}
\caption{A fast broadcast consensus protocol computing the majority predicate.}
\label{fig:ppMajority}
\end{figure}

As an introductory example, we construct a broadcast consensus protocol for the \emph{majority predicate} $\varphi(x,y)=x>y$. Figure~\ref{fig:ppMajority} depicts the protocol graphically. We have the set of states $\{x,y,\diamond\}\times\{0,1\}$, with global states $\{0,1\}$, where the states $O\defeq\{(x,1),(y,1),(\diamond,1)\}$ are accepting, and $I(x)=(x,0)$ and $I(y)=(y,0)$. The transitions are
\begin{align}
&(x,0)\mapsto(\diamond,1),\,\emptyset \tag{$\alpha$} \\
&(y,1)\mapsto(\diamond,0),\,\emptyset \tag{$\beta$}
\end{align}

Note that we use the more compact notation for transitions in the presence of global states, written in long form ($\alpha$) would be
\begin{align}
&(x,0)\mapsto(\diamond,1),\,\{(x,0)\mapsto(x,1),(y,0)\mapsto(y,1),(\diamond,0)\mapsto(\diamond,1)\} \tag{$\alpha$}
\end{align}

To make the presentation of the following sample execution more readable, we shorten the state $(i,j)$ to $i_j$. For input $x=3$ and $y=2$, an execution could look like this:
\begin{alignat*}{6}
&\multiset{x_0,x_0,x_0,y_0,y_0}
&&\trans{\Sx} \multiset{\diamond_1,x_1,x_1,y_1,y_1}
&&\trans{\Sy} \multiset{\diamond_0,x_0,x_0,\diamond_0,y_0} \\
\trans{\Sx}\; &\multiset{\diamond_1,\diamond_1,x_1,\diamond_1,y_1}
&&\trans{\Sy} \multiset{\diamond_0,\diamond_0,x_0,\diamond_0,\diamond_0}
&&\trans{\Sx} \multiset{\diamond_1,\diamond_1,\diamond_1,\diamond_1,\diamond_1}
\end{alignat*}

Intuitively, there is a preliminary global consensus, which is stored in the global state. Initially, it is rejecting, as $x>y$ is false in the case $x=y=0$. However, any $x$ agent is enough to tip the balance, moving to an accepting global state. Now any $y$ agent could speak up, flipping the consensus again.

The two factions initially belonging to $x$ and $y$, respectively, alternate in this manner by sending signals $\Sx$ and $\Sy$. Strict alternation is ensured as an agent will not broadcast to confirm the global consensus, only to change it.

After emitting the signal, the agent from the corresponding faction goes into state $\diamond$, where it can no longer influence the computation. In the end, the majority faction remains and determines the final consensus.

Considering these alternations with shrinking factions, the expected number of steps of the protocol until stabilization can be bounded by $2\sum_{k=1}^{n}{n}/{k}=\mathcal{O}(n\log n)$. To see that this holds, we consider the factions separately: let $n_0$ denote the number of agents the first faction starts with (i.e.\ agents initially in state $(x,0)$), and $n_1$ the number at the end. When we are waiting for the first transition of this faction all $n_0$ agents are enabled, so we wait $n/n_0$ steps in expectation until one of them executes a broadcast. For the next one, we wait $n/(n_0-1)$ steps. In total, this yields $\sum_{k=n_1+1}^{n_0}n/k\le\sum_{k=1}^nn/k$ steps for the first faction, and via the same analysis for the second as well.

In contrast to the $\mathcal{O}(n\log n)$ interactions this protocol takes, constant-state population protocols require $n^2$ interactions in expectation for the computation of majority \cite{alistarh2015fast}.
However, these numbers are not directly comparable: broadcasts may not be parallelizable, while it is uncontroversial to assume that $n$ rendez-vous occur in parallel time $1$.

\section{Comparison with other Models}\label{sec:cmp}

To facilitate the definition of an execution model, we only consider deterministic BCPs, in the sense that for each state there is a unique transition to execute. Blondin, Esparza and Jaax~\cite{blondin2019expressive} analysed a more general model, i.e.\ they allow multiple transitions for a single state, picking one of them uniformly at random when an agent in that state sends a broadcast. Additionally, as they consider BCPs as an extension of population protocols, they include rendez-vous transitions.

We now show that we can simulate both extensions within a constant-factor overhead.

\subsection{Non-Deterministic Broadcast Protocols}\label{sec:nondeterministic}

The following construction allows for two broadcast transitions to be executed uniformly at random from a single state. This can easily be extended to any constant number of transitions using the usual construction of a binary tree with rejection sampling.

Now assume that we are given a BCP $(Q, \Sigma, \delta_0, I, F)$ with another set of broadcast transitions $\delta_1$ and we want each agent to pick one transition uniformly at random from $\delta_0$ or $\delta_1$ whenever it executes a broadcast.

We implement this using a synthetic coin, i.e.\ we are utilising randomness provided by the scheduler to enable individual agents to make random choices. This idea has also been used for population protocols~\cite{alistarh2017time,alistarh2018recent}. Compared to these implementations, broadcasts allow for a simpler approach.

The idea is that we partition the agents into types, so that half of the agents have type 0 and the other half have type 1. Additionally, there is a global coin shared across all agents. To flip the coin, a random agent announces its type (the coin is set to heads if the agent is type 0, tails if it is type 1) and a second random agent executes a broadcast transition from either $\delta_0$ or $\delta_1$, depending on the state of the global coin that has just been set. These two steps repeat, the former flipping the coin fairly and the latter then executing the actual transitions. Figure~\ref{fig:multbcast} sketches this procedure.

Intuitively, we start with no agents having either type 0 or 1. When such a typeless agent is picked by the scheduler to announce its type (to flip the global coin) it instead broadcasts that it is searching for a partner. Once this has happened twice, these two agents are matched, one is assigned type 0 and the other type 1. Thus we ensure that there is the exact same number of type 0 and type 1 agents at all times, meaning that we get a perfectly fair coin. Additionally we make progress regardless of whether an agent with or without a type is chosen.

To describe the construction formally, we introduce a set of types $T\defeq\{?,+,-,0,1\}$, and choose the set of states $Q'\defeq Q\times T\times \{*,0,1\}$, with global states $\{*,0,1\}$ used to represent the state of the synthetic coin. We use $(q,?)$ as initial state instead of $q\in I$, and start with global state $*$. To pick types, we need transitions
\begin{alignat}{2}
&(q,?,*)\mapsto (q,+,*), \{(r,?)\mapsto (r,-):r\in Q\} \qquad &&\text{for $q\in Q$} \tag{seek}\\
&
\begin{aligned}
(q,-,*)\mapsto (q,1,*),\,&\{(r,-)\mapsto (r,?):r\in Q\} \\
\cup\,&\{(r,+)\mapsto (r,0):r\in Q\}
\end{aligned}
\qquad &&\text{for $q\in Q$} \tag{find}
\end{alignat}

\begin{figure}[t]
\def\svgwidth{122mm}\import{figures/}{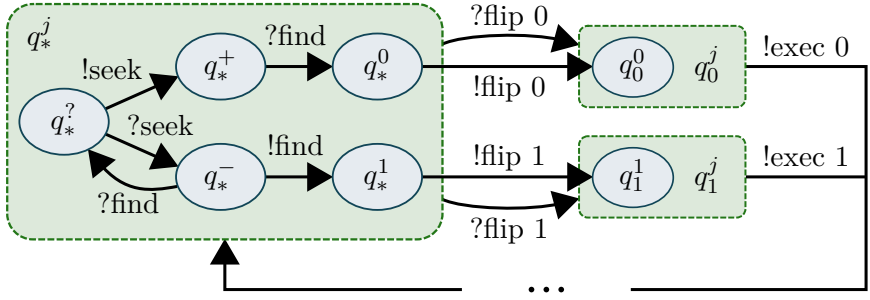}
\caption{Transition diagram for implementing multiple broadcasts per state, for $q\in Q$, with $(q,i,j)$ written as $q^i_j$. Dashed nodes represent multiple states, with $j\in T$. Transitions resulting from executing the broadcasts in $\delta_0,\delta_1$ are not shown.}
\label{fig:multbcast}
\end{figure}

So an agent of type $?$ announces that it seeks a partner, moving itself to type $+$ and the others to type $-$. Then any type $-$ agent may broadcast that a match has been found, moving itself to type $1$ and the type $+$ agent to type $0$. The other type $-$ agents revert to type $?$. This ensures that the number of type $0$ and $1$ agents is always equal. Note that there may be an odd number of agents, in which case one agent of type $+$ remains.

The following transitions effectively flip the global coin, by having an agent of type $0$ or $1$ announce that we now execute a broadcast transition from respectively $\delta_0$ or $\delta_1$. Here, we have $q\in Q, \circ\in \{0,1\}$. 
\begin{align}
& (q,\circ,*)\mapsto (q,\circ,\circ),\,\emptyset \tag{flip $\circ$}
\end{align}
Then we actually execute the transition $\delta_\circ(q)=(r,f)$, for each $(q,i)\in Q\times T$.
\begin{align}
& (q,i,\circ)\mapsto (r,i,*), \{(s,j)\mapsto (f(s),j): (s,j)\in Q\times T\} \tag{exec $\circ$}
\end{align}

As the number of type $0$ and $1$ agents is equal, we select transitions from $\delta_0$ and $\delta_1$ uniformly at random. It remains to show that the overhead of this scheme is bounded.

Executing transition (exec 0) or (exec 1) is the goal. Transitions (flip 0) and (flip 1) ensure that the former are executed in the very next step, so they cause at most a constant-factor slowdown. Transitions (seek) and (find) can be executed at most $n$ times, as they decrease the number of agent of type $?$. All that remains is the implicit silent transition of states $(q,+,j)$, which occurs with probability at most $1/n$ in each step.

Hence, to execute $m\ge n$ steps of the simulated protocol our construction takes at most $(2m+2n)\cdot n/(n-1)\le 8m$ steps in expectation.

\subsection{Population Protocols}

Another extension to BCPs is the addition of rendez-vous transitions. Here we are given a map $R:Q^2\rightarrow Q^2$. At each step, we flip a coin and either execute a broadcast transition as usual, or pick two distinct agents uniformly at random, in state $q$ and $r$, respectively. These interact and move to the two states $R(q,r)$.

Again, we can simulate this extension with only a constant-factor increase in the expected number of steps. Given a BCP $(Q, \Sigma, B, I, F)$, the idea is to add states $\{\tilde{q}:q\in Q\}\cup\{r_q:r,q\in Q\}$ and insert “activating” transitions $q\mapsto \tilde{q},\{r\mapsto r_q:r\in Q\}$ for $q\in Q$ and “deactivating” transitions $r_q\mapsto s,\{\tilde{q}\mapsto t\}\cup\{u_q\mapsto u:u\in Q\}$ for each $R(q,r)=(s,t)$. So a state $q$ first signals that it wants to start a rendez-vous transition. Then, any other state $r$ answers, both executing the transition and signalling to all other states that it has occurred.

Each state in $Q$ has exactly 2 broadcast transitions, so (using the scheme described above) the probability of executing any “activating” transition is exactly $\frac{1}{2}$, the same as doing one of the original broadcast transitions in $B$. After doing an activating transition we may do nothing for a few steps by executing the broadcast transition on $\tilde{q}$, but eventually we execute a “deactivating” transition and go back. The probability of executing a broadcast on $\tilde{q}$ is $1/n$, so simulating a single rendez-vous transition takes $1+n/(n-1)\le 3$ steps in expectation.

\section{Protocols for Presburger Arithmetic}\label{sec:fast-presburger}

While Blondin, Esparza and Jaax~\cite{blondin2019expressive} show that BCPs are more expressive than population protocols, they leave the question open whether BCPs provide a runtime speed-up for the class of Presburger predicates computable by population protocols.
We already saw that Majority can be computed within $\mathcal{O}(n \log n)$ interactions in BCPs. This also holds in general for Presburger predicates:

\begin{theorem}
  Every Presburger predicate is computable by a BCP within at most $\mathcal{O}(n \log n)$ interactions.
\end{theorem}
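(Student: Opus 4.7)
The plan is to decompose any Presburger predicate into simpler building blocks, construct a fast BCP for each, and then combine them via a product construction. By the classical characterisation due to Ginsburg and Spanier, every Presburger predicate over $\mathbb{N}^k$ is a Boolean combination of finitely many \emph{threshold} predicates $\sum_i a_i x_i \geq c$ and \emph{remainder} predicates $\sum_i a_i x_i \equiv c \pmod{m}$ with integer constants $a_i, c, m$. It therefore suffices to show that (i) each such atomic predicate is computable by a BCP in $\mathcal{O}(n\log n)$ expected stabilisation time and that (ii) $\mathcal{O}(1)$-fold Boolean combinations preserve that bound.

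For a threshold predicate I would generalise the Majority construction of Section~\ref{sec:exmp}. Classify the input types by the sign of their coefficient: types with $a_i > 0$ form the \emph{positive} faction and those with $a_i < 0$ the \emph{negative} one. Equip each $x_i$-agent with $|a_i|$ internal ``broadcast tokens'' (a constant number of states per agent). A positive agent with remaining tokens may broadcast only when the global state is currently rejecting, in which case it flips the global state to accepting and consumes a token; negative agents behave symmetrically. To incorporate the constant $c$, I would elect a single leader via one broadcast step (any initial state can fire a ``I am the leader'' transition that demotes everyone else, which terminates in expected $\mathcal{O}(1)$ steps) and load the leader with $|c|$ extra tokens of the appropriate sign. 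Stabilisation is reached once the losing faction has exhausted its tokens, and the expected time follows exactly the Majority analysis: $2\sum_{k=1}^{Cn} n/k = \mathcal{O}(n\log n)$ where $C=\max_i|a_i|+|c|$ is constant.

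For a remainder predicate the construction is simpler. Take global state $\mathbb{Z}/m\mathbb{Z}$ initialised to $0$ and tag every agent with an \emph{active}/\emph{done} flag. An active $x_i$-agent may broadcast, which adds $a_i$ to the global state modulo $m$ and switches the agent to \emph{done}. The accepting states are exactly those whose global component equals $c$. Every agent broadcasts at most once, so once all agents are \emph{done} the global state equals $\sum_i a_i x_i \bmod m$ and the configuration is a stable consensus. The expected stabilisation time is the coupon-collector sum $\sum_{k=1}^{n} n/k = \mathcal{O}(n\log n)$.

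For a Boolean combination I would use the standard product construction: each agent stores a tuple of sub-states, the global state records the tuple of sub-global-states, and the accepting set is determined by the target Boolean formula. The combined transition function is realised by letting each broadcast advance one sub-protocol at a time, chosen via the non-deterministic mechanism from Section~\ref{sec:nondeterministic}; since the number of sub-protocols is constant, this introduces only a constant-factor slowdown, keeping the expected stabilisation time in $\mathcal{O}(n\log n)$. The main obstacle I expect is the threshold sub-protocol: one must verify that agents genuinely stop broadcasting once their faction has ``lost'' (otherwise the global consensus could be flipped infinitely often and stabilisation would fail), and that the leader-based encoding of $c$ cooperates correctly with the token-consumption mechanism for arbitrary signs of the $a_i$'s.
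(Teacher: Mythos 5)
Your proposal is correct in outline and follows the same top-level decomposition as the paper (Ginsburg--Spanier: Boolean combinations of threshold and remainder atoms, with a product construction for the combination), and your remainder protocol is essentially identical to the paper's. Where you genuinely diverge is in the threshold atoms. The paper does \emph{not} generalise the alternating-factions Majority protocol; instead it keeps a counter in the global state ranging over $[-2A,2A]$ with $A=\max\{|\alpha_1|,\ldots,|\alpha_k|,|c|\}$, lets each agent deposit its entire contribution $\alpha_i$ in a single broadcast whenever this does not overflow the counter, and reads the output off from whether the counter is below $c$ (no leader, no tokens). Correctness there rests on the invariant that the counter plus the undeposited contributions is constant, together with the observation that a wrong consensus always enables some deposit. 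Your token-based alternation scheme also works and is arguably closer in spirit to Section~\ref{sec:exmp}, but two points need care. First, the time bound: the scheduler picks \emph{agents}, not tokens, so when $r$ tokens remain in the currently enabled faction the number of enabled agents is only at least $\lceil r/C\rceil$ with $C=\max_i|a_i|+|c|$; the correct sum is $\sum_{r=1}^{Cn} Cn/r=\mathcal{O}(n\log n)$, so the conclusion survives, but your displayed sum $2\sum_{k=1}^{Cn}n/k$ implicitly treats tokens as independently schedulable agents. Second, the obstacle you flag yourself (agents must fall silent once their faction can no longer change the consensus) is exactly what the Majority protocol's ``broadcast only to flip, never to confirm'' rule already provides, so it is resolvable but must be made explicit. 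Finally, for the Boolean combination you do not need the nondeterministic interleaving of Section~\ref{sec:cmp}: since each state has a unique transition, the product protocol can advance \emph{all} components simultaneously on every broadcast (this is what the paper does in its appendix), which is simpler and avoids even the constant-factor slowdown.
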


We remark that the $\mathcal{O}(n \log n)$ bound is asymptotically optimal: e.g.\ the stable consensus for the parity predicate $(x = 1 \text{ mod } 2)$ must alternate with configuration size, which clearly requires every agent to perform at least one broadcast in the computation, and thus yields a lower bound of $\sum_{k=1}^{n} \frac{n}{k} = \Omega(n \log n)$ steps like in the coupon collector's problem \cite{flajolet1992birthday}.

It is known \cite{ginsburg1966} that every Presburger predicate can be expressed  as Boolean combination of linear inequalities and linear congruence equations over the integers, i.e. as Boolean combination of predicates of the form $\sum_i \alpha_ix_i < c$, and $\sum_i \alpha_i x_i = c \text{ mod } m$, where the $\alpha_i$, $c$ and $m$ are integer constants.
In Section~\ref{sec:linearinequalities} we will construct BCPs that compute arbitrary linear inequalities, before we briefly sketch the construction for congruences and Boolean combinations in Section~\ref{sec:moduloandbool}.

\subsection{Linear Inequalities}\label{sec:linearinequalities}
\begin{proposition}\label{prop:linearinequalities}
Let $\alpha_1, \ldots, \alpha_k, c\in\mathbb{Z}$ and let $\varphi(x_1, \ldots, x_k) \defiff \sum_{i=1}^k \alpha_i x_i < c$ denote a linear inequality. There exists a broadcast consensus protocol that computes $\varphi$ within $\mathcal{O}(n \log n)$ interactions in expectation.
\end{proposition}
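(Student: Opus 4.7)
The plan is to generalise the majority protocol from Section~\ref{sec:exmp}, replacing the $\pm 1$ charges of the $x$- and $y$-factions by arbitrary integer weights $\alpha_i$. Set $A\defeq\max_i|\alpha_i|$. Each agent holds a local state $(x_i,m)$ with $m\in\{0,\dots,|\alpha_i|\}$ recording its remaining unspent charge, together with a global counter $g$ ranging over a constant-size interval around zero that depends only on $c$. The input mapping is $I(x_i)\defeq(x_i,|\alpha_i|)$, and the initial global state is $g=-c$, so that the invariant $g=-c+B_P-B_N$ holds at all times, where $B_P$ and $B_N$ count the positive and negative broadcasts executed so far. The accepting states are those with $g\le -1$.

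For the transitions, an agent $(x_i,m)$ with $\alpha_i>0$ and $m>0$ broadcasts exactly when $g\le -1$, decrementing $m$ and incrementing $g$; symmetrically, an agent with $\alpha_i<0$ and $m>0$ broadcasts when $g\ge 0$, decrementing $m$ and decrementing $g$. All other states are silent. As in the majority example, the two factions alternate and the computation stabilises as soon as one of them has exhausted its total charge. A short case analysis shows that $g$ then settles in $\{-1,0\}$, with $g=-1$ holding iff $\sum_i\alpha_i x_i<c$; the bounded range of $g$ together with the monotonically decreasing $m$-counters ensures that no subsequent broadcast can destabilise this consensus.

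For the runtime, let $P\defeq\sum_{\alpha_i>0}\alpha_i x_i$ and $N\defeq -\sum_{\alpha_i<0}\alpha_i x_i$; the total number of broadcasts until stabilisation is at most $P+N+|c|=\mathcal{O}(n)$. The key observation is that just before the $j$-th positive broadcast the remaining positive charge equals $P-j+1$, and since every agent holds at most $A$ units of charge, at least $\lceil(P-j+1)/A\rceil$ positive agents are still enabled (and these are the only enabled agents during a positive phase). The expected wait for that broadcast is therefore at most $An/(P-j+1)$; summing over $j$ gives $An\sum_{j=1}^{P}1/j=\mathcal{O}(n\log n)$, and the negative side is handled symmetrically.

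The main obstacle I expect is the time bound rather than the correctness argument. The naive ``one agent depleted per broadcast'' estimate yields only an $\Omega(n^2)$ bound when $A>1$, because an agent becomes inert only after $|\alpha_i|$ broadcasts rather than a single one. Lower-bounding the number of enabled agents by remaining-charge-over-$A$ is what restores the coupon-collector-style harmonic sum and recovers the $\mathcal{O}(n\log n)$ runtime familiar from the majority analysis.
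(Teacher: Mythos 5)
Your proposal is correct, but it takes a genuinely different route from the paper's. The paper's protocol has each agent deposit its \emph{entire} contribution $\alpha_i$ into the global counter in a single broadcast, with the counter ranging over $[-2A,2A]$ so that it can always absorb one full contribution; a transition is enabled exactly when it would not overflow this range. Every agent then broadcasts at most once, and the $\mathcal{O}(n\log n)$ bound is an immediate harmonic sum over at most $n$ agents. You instead generalise the majority example of Section~\ref{sec:exmp} literally: agents spend their charge one unit at a time and the two factions alternate around the threshold. This costs up to $A$ broadcasts per agent, and, as you rightly point out, the naive ``one agent retired per broadcast'' count would only give a quadratic bound; your lower bound of (remaining charge)$/A$ on the number of enabled agents restores the harmonic sum and is sound. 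The trade-off: the paper gets a one-line time analysis at the price of the overflow-guarded enabling condition and the argument that a wrong consensus always enables some in-range transition, while you keep the counter updates trivial ($\pm1$ steps, strict alternation) at the price of the charge-counting refinement. One inaccuracy in your correctness sketch: it is \emph{not} true that $g$ always settles in $\{-1,0\}$ --- if one faction's total charge is too small to push $g$ across the threshold (e.g.\ $c=5$ with total positive charge $2$, where $g$ ends at $-3$), it stabilises elsewhere. This does not break the protocol, because your acceptance condition is $g\le-1$ and at stabilisation the enabled faction must be exhausted, so the invariant $\sum_i\alpha_i x_i-c=(\text{remaining positive charge})-(\text{remaining negative charge})+g$ forces the correct sign in every case; the case analysis should be phrased that way rather than via the claim $g\in\{-1,0\}$.
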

\begin{proof}
We assume wlog that $\alpha_i\ne 0$ for $i=1,...,k$ and that $\alpha_1,...,\alpha_k$ are pairwise distinct.

Let $A \defeq \text{max}\{|\alpha_1|, |\alpha_2 |\ldots, |\alpha_k|, |c|\}$. We define a BCP $\PP = (Q\times G, \Sigma, \delta, I, O)$ with global states $G$, where
\begin{align*}
Q & \defeq  \{0, \alpha_1,...,\alpha_k\} &
\Sigma & \defeq \{x_1, \ldots, x_k \} \\
G & \defeq [-2A, 2A] &
O & \defeq \{(q,v):v<c\}
\end{align*}
As inputs we get $I(x_i)\defeq(\alpha_i,0)$ for each $i=1,...,k$.
The transitions $\delta$ are constructed as follows. For every $v \in [-2A, 2A]$ and every $\alpha_i$ satisfying $v+\alpha_i \in [-2A, 2A]$, we add the following transition to $T$:
\begin{equation}
(\alpha_i,v)\mapsto(0,v+\alpha_i),\,\emptyset \tag{$\alpha_i$}
\end{equation}

Intuitively, in the first component of its state an agent stores its contribution to $\sum_i\alpha_ix_i$, the left-hand side of the inequality. The global state is used to store a counter value, initially set to $0$. Each agent adds its contribution to the counter, as long as it does not overflow. The counter goes from $-2A$ to $2A$, which allows it to store the threshold plus any single contribution. The final counter value then determines the outcome of the computation.

\parag{Correctness}
Let $\Op{ctr}(C)$ denote the global state (and thus current counter value) of configuration $C$. Further, let
\[\Op{sum}(C) \defeq \sum_{(\alpha, v) \in Q} C(\alpha,v) \cdot \alpha + \Op{ctr}(C)\]
denote the sum of all agents' contributions and the current value of the counter. Every initial configuration $C_0$ has $\Op{ctr}(C)=0$ and thus $\Op{sum}(C)=\sum_i\alpha x_i$. Each transition $\alpha$ increases the counter by $\alpha$ but sets the agent's contribution to $0$ (from $\alpha$), so $\Op{sum}(C)$ is constant throughout the execution.

  Recall that our output mapping depends only on the value of the counter, so our agents always form a consensus (though not necessarily a stable one). If this consensus and $\varphi(C_0)$ disagree, then, we claim, a non-silent transition is enabled.

  To see this, note that the current consensus depends on whether $\Op{ctr}(C)<c$. If that is the case, but $\varphi(C_0)=0$, then $\Op{sum}(C)\ge c$ and some agent with positive contribution $\alpha>0$ exists. Due to $\Op{ctr}(C)<c$, transition $\alpha$ is enabled. Conversely, if $\Op{ctr}(C)\ge c$ and $\varphi(C_0)=1$, some transition $\alpha$ with $\alpha<0$ will be enabled.

  Finally, note that each non-silent transition increases the number of agents with contribution $0$ by one, so at most $n$ can be executed in total. So the execution converges and reaches, by the above argument, a correct consensus.

  \parag{Convergence time}
  Each agent executes at most one non-silent transition. To estimate the total number of steps, we partition the agents by their current contribution: for a configuration $C$ let $C^+\defeq C\upharpoonright\{(q,v)\in Q:q>0\}$ denote the agents with positive contribution, and define $C^-$ analogously. We have that either $\Op{ctr}(C)<0$ and all transitions of agents in $C^+$ would be enabled, or $\Op{ctr}(C)\ge 0$ and the transitions of $C^-$ could be executed.

  If $C^+$ is enabled, then we have to wait at most $n/|C^+|$ steps in expectation until a transition is executed, which reduces $|C^+|$ by one. In total we get $n/|C_0^+|+n/(|C_0^+|-1)+...+n/1\in\mathcal{O}(n\log n)$. The same holds for $C^-$, yielding our overall bound of $\mathcal{O}(n\log n)$.
\end{proof}

\subsection{Modulo Predicates and Boolean Combinations}\label{sec:moduloandbool}

\begin{proposition}
  Let $\varphi(x_1,...,x_k) \defiff \sum_{i=1}^k\alpha_ix_i \equiv c\pmod l < c$ denote a linear inequality, with $\alpha_1, \ldots, \alpha_k, c,l\in\mathbb{Z}, l\ge2$. There exists a broadcast consensus protocol that computes $\varphi$ within $\mathcal{O}(n \log n)$ interactions in expectation.
\end{proposition}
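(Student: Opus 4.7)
The plan is to mimic the construction of Proposition~\ref{prop:linearinequalities}, but with the global counter reduced modulo $l$ instead of clamped to a finite interval. Concretely, I would take $Q \defeq \{0, \alpha_1 \bmod l, \ldots, \alpha_k \bmod l\}$, $G \defeq \mathbb{Z}/l\mathbb{Z}$, input mapping $I(x_i) \defeq (\alpha_i \bmod l, 0)$, and accepting set $O \defeq \{(q, v) : v \equiv c \pmod{l}\}$. The only transition associated with a state $(\alpha, v)$ with $\alpha \neq 0$ is
\begin{equation*}
(\alpha, v) \mapsto (0, v + \alpha \bmod l),\,\emptyset.
\end{equation*}
States $(0, v)$ are silent. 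Note that unlike the inequality case, no overflow check is needed, because addition in $\mathbb{Z}/l\mathbb{Z}$ is total.

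For correctness, define $\Op{sum}(C) \defeq \sum_{(\alpha, v)} C(\alpha, v)\cdot \alpha + \Op{ctr}(C) \pmod{l}$. An initial configuration $C_0$ satisfies $\Op{sum}(C_0) \equiv \sum_i \alpha_i x_i \pmod l$, and one checks immediately that each transition preserves $\Op{sum}(C) \bmod l$. Since the output depends only on $\Op{ctr}(C) \bmod l$, every configuration is already a consensus; the consensus matches $\varphi(C_0)$ exactly when all agents have contribution~$0$ (i.e.\ are in state $(0, v)$), because then $\Op{ctr}(C) \equiv \Op{sum}(C_0) \pmod l$.

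For convergence time, each agent fires at most one non-silent transition, since its first-coordinate state becomes $0$ afterwards and stays there. Let $m(C)$ denote the number of agents with nonzero first coordinate. Whenever $m(C) > 0$, all such agents have an enabled non-silent transition (here the modulo case is simpler than the inequality case, since enabledness does not depend on the counter). Thus the expected time to decrease $m$ from $m_0$ to $0$ is $\sum_{k=1}^{m_0} n/k \in \mathcal{O}(n \log n)$, giving the claimed bound.

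The main obstacle I anticipate is essentially notational: ensuring that after representatives are chosen modulo $l$, coefficients $\alpha_i$ that are $\equiv 0 \pmod l$ are correctly handled (such agents can be treated as silent from the start without affecting the sum), and that distinct $\alpha_i$'s with the same residue modulo $l$ are merged into a single state. All of this is handled by working with $\alpha_i \bmod l$ throughout; once that is fixed, the correctness and time analysis carry over verbatim from the single-sided version of the argument in Proposition~\ref{prop:linearinequalities}.
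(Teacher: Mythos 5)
Your proposal is correct and matches the paper's approach exactly: the paper only sketches this proof, saying the construction is that of Proposition~\ref{prop:linearinequalities} with the counter additions performed modulo $l$, which is precisely what you carry out (and your observation that the time analysis simplifies, since enabledness no longer depends on the counter value, is accurate). The only cosmetic remark is that the ``$< c$'' in the statement is a typo in the paper; the intended predicate is the congruence $\sum_i \alpha_i x_i \equiv c \pmod l$, which is how you read it.
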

\begin{proof}[\Proofsketch]
The idea is the same as for Proposition~\ref{prop:linearinequalities}, but instead of taking care not to overflow the counter we simply perform the additions modulo $l$.
\end{proof}

\begin{restatable}[Boolean combination of predicates]{proposition}{restatebppforboolean} \label{prop:bppforboolean}
Let $\varphi$ be a Boolean combination of predicates $\varphi_1,...,\varphi_k$, which are computed by BCPs $\PP_1,...,\PP_k$, respectively, within $\mathcal{O}(n\log n)$ interactions. Then there is a protocol computing $\varphi$ within $\mathcal{O}(n\log n)$ interactions.
\end{restatable}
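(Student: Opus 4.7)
The plan is to simulate the $k$ sub-protocols $\PP_1,\ldots,\PP_k$ in parallel inside a single product BCP. I would take the state set to be the Cartesian product $Q_1\times\cdots\times Q_k$, use the componentwise input mapping $I(x)\defeq(I_1(x),\ldots,I_k(x))$, and declare a tuple $(q_1,\ldots,q_k)$ to be accepting iff the Boolean formula defining $\varphi$ evaluates to $1$ on the bit vector $([q_1\in O_1],\ldots,[q_k\in O_k])$. Each agent is given $k$ candidate broadcasts, the $i$-th of which simulates one broadcast of $\PP_i$ on the $i$-th components of every agent while leaving all other components untouched.

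Since $k$ is constant, the non-deterministic broadcast construction of Section~\ref{sec:nondeterministic} (extended to constant-sized choices via a binary tree with rejection sampling, as noted there) lets the agents pick one of these $k$ transitions uniformly at random with only a constant-factor overhead in the expected step count. Consequently, at every global step the combined protocol tosses a fair $k$-sided coin and performs a single broadcast of the selected sub-protocol on a uniformly chosen agent; from $\PP_i$'s point of view, its $i$-th components evolve exactly as a standalone random execution of $\PP_i$ that is scheduled at rate $1/k$. Hence each $\PP_i$ stabilises almost surely to $\varphi_i(x)$, and from then on the output bit of the combined protocol is constant and equal to $\varphi(x)$.

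For the runtime analysis, let $S_i$ be the number of $\PP_i$-steps needed for the $i$-th component to stabilise, so by assumption $\mathbb{E}(S_i)=\mathcal{O}(n\log n)$. Let $T_i$ denote the corresponding number of combined steps. Since successive $\PP_i$-steps are separated by geometric waiting times with mean $k$, Wald's identity gives $\mathbb{E}(T_i)=k\cdot\mathbb{E}(S_i)=\mathcal{O}(n\log n)$. The whole product protocol is stable once every sub-protocol is, so its stabilisation time is bounded by $\max_i T_i\le\sum_i T_i$, which has expectation $\mathcal{O}(k\cdot n\log n)=\mathcal{O}(n\log n)$ because $k$ is constant. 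Multiplying by the constant overhead of the non-deterministic construction preserves the bound.

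The main subtle point, in my view, is justifying that the induced subsequence of $\PP_i$-steps really follows the distribution of a standalone random execution of $\PP_i$: one needs to verify that the $i$-th marginals of configurations in the product protocol form a Markov chain with the same kernel as $\PP_i$. This reduces to the observation that, conditioned on $\PP_i$ being the scheduled sub-protocol, the broadcasting agent is still uniform over the population, which corresponds to a uniform choice among $i$-th components. Everything else is routine bookkeeping on the Cartesian-product states.
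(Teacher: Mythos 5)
Your proof is correct, but it takes a genuinely different and more roundabout route than the paper. The paper's construction (Appendix~\ref{app:bppforboolean}) also uses the product state space $Q_1\times\cdots\times Q_k$, but it does \emph{not} interleave the sub-protocols: since each $\PP_i$ is deterministic (one broadcast per state), a single broadcast of the product protocol can simply execute the broadcasts of \emph{all} $k$ components simultaneously, i.e.\ $\delta((q_1,\dots,q_k))$ moves the sender to $(r_1,\dots,r_k)$ and every receiver componentwise via $(f_1,\dots,f_k)$. Each projection of a random execution of the product is then \emph{exactly} a random execution of $\PP_i$ (same time scale, no thinning), so the stabilisation time is bounded by $\max_i T_i\le\sum_i T_i$ with no further argument. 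Your version instead schedules one sub-protocol per step via the synthetic-coin construction of Section~\ref{sec:nondeterministic}, which forces you to (a) invoke the $k$-way extension of that construction and pay its constant-factor overhead, (b) verify that the thinned subsequence of $\PP_i$-steps has the right law (your ``main subtle point'', which you do resolve correctly: the broadcaster is uniform and independent of the coin outcome), and (c) apply Wald's identity to recover the expected-time bound. All of these steps go through, so the result holds, but each of them is machinery that the lockstep composition makes unnecessary; the only thing your approach would buy is applicability in a setting where simultaneous composition of the $k$ broadcasts were impossible, which is not the case for BCPs as defined here. One small presentational point: you should note explicitly (as you implicitly assume) that the output of a product state depends only on the $Q_1\times\cdots\times Q_k$ component, so the auxiliary type/coin components of the synthetic-coin construction cannot break an established consensus.
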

\begin{proof}[\Proofsketch]
We do a simple parallel composition of the $k$ BCPs, which is the same construction as used for ordinary population protocols (see for example \cite[Lemma 6]{angluin2006computation}). The full proof can be found in Appendix~\ref{app:bppforboolean}.
\end{proof}

\section{Protocols for all Predicates in ZPL}\label{sec:fast-general}

BCPs compute precisely the predicates in $\mathsf{NL}$ with input encoded in unary, which corresponds to $\mathsf{NSPACE}(n)$ when encoded in binary. The proof of the $\NL$ lower bound by Blondin, Esparza and Jaax~\cite{blondin2019expressive} goes through multiple stages of reduction and thus does not reveal which predicates can be computed \emph{efficiently}. We will now take a more direct approach, using a construction similar to the one by Angluin, Aspnes and Eisenstat~\cite{angluin2008fast}. A step of a randomised Turing machine (RTM) can be simulated using variants of the protocols for Presburger predicates from Section~\ref{sec:fast-presburger}, which we combine with a clock to determine whether the step has finished, with high probability.

Instead of simulating RTMs directly, it is more convenient to first reduce them to counter machines. Here, we will use counter machines that are both randomised and capable of multiplying and dividing by two, with the latter also determining the remainder. This ensures that the reduction is performed efficiently, i.e.\ with overhead of $\mathcal{O}(n\log n)$ interactions per step. Our counter machines do not support a decrement operation, as this is not needed for our reduction (it could also easily be simulated using the other operations).

We start by showing the other direction, that BCPs can be simulated efficiently by RTMs.

\begin{lemma}\label{lem:polybcpinzl}
Polynomial-time BCPs compute at most the predicates in $\mathsf{ZPL}$ with input encoded in unary.
\end{lemma}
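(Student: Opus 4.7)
The plan is to build a randomized log-space Turing machine $M$ that, on input $X \in \mathbb{N}^\Sigma$ encoded in unary (so the input size is $n$), simulates the polynomial-time BCP $\mathcal{P}$ and outputs $\varphi(X)$ with zero error in expected polynomial time.

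First, I would set up the log-space simulation. A configuration $C \in \mathbb{N}^Q$ fits into $O(\log n)$ bits, since $|Q|$ is a constant of $\mathcal{P}$ and each of the $|Q|$ counters has value at most $n$. One step of $\mathcal{P}$ is then simulated using $O(\log n)$ space: $M$ draws a uniform random integer $r \in \{1, \ldots, n\}$ using $O(\log n)$ random bits, traverses $Q$ via cumulative counter sums to identify the broadcasting state $q$ (picked with probability $C(q)/n$), and applies $\delta(q) = (r', f)$ by rewriting the $|Q|$ counters in turn, using a constant number of auxiliary counters.

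For zero-error termination, $M$ may only output $b$ once it has certified that the current configuration $C$ lies in the set $S_b \defeq \{C : \text{every configuration reachable from } C \text{ is a } b\text{-consensus}\}$. My plan is to interleave the simulation with a verifier for ``$C \in S_b$'': whenever the current $C$ happens to be a $b$-consensus, invoke the verifier, and halt outputting $b$ as soon as one call succeeds. Since $\mathcal{P}$ is polynomial-time with $\mathbb{E}(T_\pi) \le p(n)$ for some polynomial $p$, Markov's inequality gives that after $2 p(n)$ simulation steps we already lie in $S_{\varphi(X)}$ with probability at least $1/2$, so the expected number of verifier invocations until success is $O(1)$.

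The main obstacle is implementing the stability verifier in randomized log space with zero error. The query ``$C \in S_b$'' is a $\mathsf{coNL}$ question on the configuration graph of $\mathcal{P}$, which has only $(n+1)^{|Q|}$, i.e.\ polynomially many, vertices; by Immerman-Szelepcs\'enyi it also lies in $\mathsf{NL}$. My plan is to adapt the Immerman-Szelepcs\'enyi inductive counting to this setting: simultaneously count the number of configurations reachable from $C$ and the number of those reachable configurations lying in the $b$-consensus, and certify $C \in S_b$ precisely when the two counts agree. Random bits would substitute for the nondeterministic choices, and the polynomial-time property of $\mathcal{P}$ would bound the expected number of retries. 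Verifying that the resulting scheme is zero-error with expected polynomial runtime is where I anticipate the bulk of the technical work to lie.
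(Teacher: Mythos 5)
The simulation half of your plan (configurations as $|Q|$ binary counters, sampling an agent with $\mathcal{O}(\log n)$ random bits, applying $\delta$ with logarithmic overhead) is exactly what the paper does. The gap is in your stability verifier. You propose to decide ``every configuration reachable from $C$ is a $b$-consensus'' by running Immerman--Szelepcs\'enyi inductive counting on the configuration graph with ``random bits substituting for the nondeterministic choices.'' This does not give a zero-error procedure with polynomially many expected retries: a single randomized run of the inductive-counting algorithm must correctly guess reachability witnesses (paths of polynomial length) for each of the polynomially many reachable configurations, so its success probability is of order $2^{-\operatorname{poly}(n)}$ and the expected number of retries is exponential. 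In effect you would be proving $\mathsf{NL}\cap\mathsf{coNL}\subseteq\mathsf{ZPL}$ as a subroutine, which is not known and certainly not something one gets by naively randomizing the nondeterminism. So the ``bulk of the technical work'' you defer is not merely technical --- as stated, the scheme fails to be expected polynomial time.

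The paper avoids graph exploration entirely. The set $U$ of unstable configurations is upwards-closed (adding agents to an unstable configuration cannot make it stable), so by Dickson's lemma $U$ has finitely many minimal elements; since the protocol $\mathcal{P}$ is fixed and not part of the input, these minimal elements can be precomputed and hard-coded into the Turing machine. Deciding whether the current configuration $C$ is stable then reduces to a constant number of componentwise comparisons $C'\leq C$ against the stored minimal unstable configurations, i.e.\ a deterministic $\mathcal{O}(\log n)$-time check performed after each simulated step. This removes the need for any randomized verifier and makes the zero-error and expected-polynomial-time claims immediate. Your Markov-inequality bound on the number of steps until stabilization is fine and is implicitly the same argument the paper uses to bound the RTM's expected running time.
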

\begin{proof}
An RTM can store the number of agents in each state as binary counters. Picking an agent uniformly at random can be done in $\mathcal{O}(\log n)$ time by picking a random number between 1 and $n$ and comparing it to the agents in the different states. Simulating a transition can also be done with logarithmic overhead. It can further be shown (see Appendix~\ref{app:proof-stable}) that stabilization of the execution is decidable in time $\mathcal{O}(\log n)$. As the BCP uses only $\mathcal{O}(\operatorname{poly}n)$ interactions (in expectation) the RTM is also $\mathcal{O}(\operatorname{poly}n)$ time-bounded.
\end{proof}

\begin{theorem}\label{thm:polybppiszl}
Polynomial-time BCPs compute exactly the predicates in $\mathsf{ZPL}$ with input encoded in unary.
\end{theorem}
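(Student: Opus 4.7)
The inclusion of polynomial-time BCPs in $\mathsf{ZPL}$ is Lemma~\ref{lem:polybcpinzl}. For the converse, given $\varphi\in\mathsf{ZPL}$, fix a log-space randomised TM $M$ with zero error that decides $\varphi$ in expected runtime $t(n)\in\mathcal{O}(\operatorname{poly}(n))$. The plan is to build a BCP that simulates $M$ with a per-step overhead of $\mathcal{O}(n\log n)$ interactions, so that the total expected number of interactions is $\mathcal{O}(t(n)\cdot n\log n)\in\mathcal{O}(\operatorname{poly}(n))$.

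First I would route the simulation through a randomised counter machine $\mathcal{M}$ of the kind described before Lemma~\ref{lem:polybcpinzl}: a constant number of counters of value polynomial in $n$, together with the operations increment, multiplication by $2$, division by $2$ with remainder, random bit, zero-test, and conditional branching on the control state. Because $M$ uses only $O(\log n)$ bits of working memory, its whole configuration packs into a constant number of such counters, and each TM transition decomposes into $O(1)$ counter-machine steps; expected polynomial runtime and zero-error survive this reduction.

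Each counter of $\mathcal{M}$ is then represented in the BCP using an additive encoding in the spirit of Proposition~\ref{prop:linearinequalities}: the value equals the total contribution of the agents plus a bounded offset stored in the global state. The control location of $\mathcal{M}$ also lives in the global state, and random bits come from the synthetic coin of Section~\ref{sec:nondeterministic}. Each counter operation---doubling, halving with remainder, zero-test, conditional branch---becomes a sub-protocol structurally analogous to the inequality protocol of Proposition~\ref{prop:linearinequalities}, settling in $\mathcal{O}(n\log n)$ interactions in expectation.

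The hard part, and the reason epidemics are needed, is phase synchronisation: agents must not begin simulating step $i+1$ while others are still finishing step $i$, lest the shared counters be corrupted. I would overlay an epidemics-based clock whose tick spreads through the population over $\Theta(n\log n)$ interactions, so that the probability it fires before the current operation has settled is polynomially small. The class $\mathsf{ZPL}$ demands zero error, so one cannot merely trust a probabilistic clock: the BCP must verify each phase with a lightweight validation sub-protocol (checking that no agent is mid-operation) and, on failure, restart the simulation from the input, which is preserved by designated marker agents holding $I(X)$. Since validation fails with only inverse-polynomial probability, the expected number of restarts is $O(1)$ per step and the overall expected number of interactions remains polynomial, while a successful simulation inherits the deterministic correctness of $M$---yielding a polynomial-time BCP computing $\varphi$.
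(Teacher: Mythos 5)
Your proposal follows essentially the same route as the paper: the upper bound via Lemma~\ref{lem:polybcpinzl}, reduction of the log-space RTM to a multiplicative counter machine (Theorem~\ref{thm:tm-to-cm}), additive distribution of each counter over the agents with $\mathcal{O}(n\log n)$ interactions per operation, epidemics-based clocks to decide (with high probability) that a phase has completed, and a restart from the preserved input upon failure. The one point to tighten is that the per-step failure probability must be driven below the reciprocal of the total polynomial number of simulated steps---which the tunable clock of Lemma~\ref{lem:simulateclockwell} provides---so that the expected number of whole-computation restarts is $\mathcal{O}(1)$ rather than merely ``$\mathcal{O}(1)$ per step.''
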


The proof of Theorem~\ref{thm:polybppiszl} will take up the remainder of this section. But before we get into the details of that, we need to formally define counter machines.

\newcommand{\Cmd}{\mathrm{Cmd}}
\newcommand{\Ret}{\mathrm{Ret}}
\newcommand{\StepE}{\operatorname{step}}

\parag{Counter machines}
Let $\Cmd\defeq\{\Op{mul}_2,\Op{inc},\Op{divmod}_2,\Op{iszero}\}$ denote a set of commands, and $\Ret\defeq\{\Op{done}_0,\Op{done}_1\}$ a set of completion statuses. A \emph{multiplicative counter machine with $k$ counters ($k$-CM)} $A=(S,\mathcal{T}_1,\mathcal{T}_2)$ consists of a finite set of states $S$ with $\Op{init},0,1\in S$ and two transition functions $\mathcal{T}_1,\mathcal{T}_2$ mapping a state $q\in S$ to a tuple $(i,j,q'_0,q'_1)$ where $i\in\{1,...,k\}$ refers to a counter, $j\in\Cmd$ is a command, and $q'_0,q'_1\in S$ are successor states ($q'_1$ is not used for $\Op{mul}_2$ and $\Op{inc}$ operations). Additionally, we require that $\mathcal{T}_1,\mathcal{T}_2$ map $q\in\{0,1\}$ to $(1,\Op{iszero},q,q)$, effectively executing no operation from those states.

The idea is that $A$, starting in state $\Op{init}$, picks transitions uniformly at random from either $\mathcal{T}_1$ or $\mathcal{T}_2$. Apart from this randomness, the transitions are deterministic. Eventually, $A$ ends up in either state $0$ or $1$, at which point it cannot perform further actions, thereby indicating whether the input is accepted or rejected.

\parag{Step-execution function}
A \emph{CM-configuration} is a tuple $K=(q,x_1,...,x_k)\in Q\times\mathbb{N}^k$. We define the \emph{step-execution function} $\StepE$ as follows, with $x\in\mathbb{N}$:
\begin{itemize}
\item $\StepE(\Op{mul}_2,x)\defeq (\Op{done}_0,2x)$,
\item $\StepE(\Op{inc},x)\defeq (\Op{done}_0,x+1)$,
\item $\StepE(\Op{divmod},2x+b)\defeq (\Op{done}_b,x)$, for $b\in\{0,1\}$, and
\item $\StepE(\Op{iszero},x)\defeq (\Op{done}_b,x)$, where $b$ is $1$ if $x>0$ and $0$ else.
\end{itemize}
For two CM-configurations $K=(q,x_1,...,x_k)$ and $K'=(q',x_1',...,x_k')$ where $\mathcal{T}_\circ(q)=(i,j,q'_0,q'_1)$ for $\circ\in\{1,2\}$ we write $K\trans{\circ} K'$ if $\StepE(j,x_i)=(\Op{done}_b,x_i')$, $q'=q'_b$ for some $b\in\{0,1\}$, and $x_r=x_r'$ for $r\ne i$. Note that for each $K$ and $\circ$ there is exactly one $K'$ with $K\trans{\circ}K'$.

The reasoning for introducing the step-execution function is that we want to construct a broadcast protocol (BP) which simulates just one step of the CM. Later on we can use this BP as a building block in a more general protocol.

\parag{Computation}
Let $\varphi:\mathbb{N}^l\rightarrow\{0,1\}$ denote a predicate, for $l\le k$, and $C\in\mathbb{N}^l$ an input to $\varphi$. We sample a \emph{random (CM-)execution $\pi=K_0K_1K_2...$ for input $C$}, where $K_0,...$ are CM-configurations, via a Markov chain. For the initial configuration we have $K_0\defeq (\Op{init},C(1),...,C(l),0,...,0)$, and $K_i$ is determined as the unique configuration with $K_{i-1}\trans{\circ}K_i$, where $\circ\in\{1,2\}$ is chosen uniformly at random. (So $\pi$ is the random variable defined as trace of the Markov Chain.)

We say that $A$ \emph{computes $\varphi$ within $f(n)$ steps} if for each $C\in \mathbb{N}^l$ with $|C|=n$ the random execution for input $C$ reaches a configuration in $\{\varphi(C)\}\times\mathbb{N}^k$ after at most $f(n)$ steps in expectation. Finally, we say that $A$ is \emph{$n$-bounded} if the random executions for inputs $C$ with $|C|=n$ can only reach configurations in $Q\times\mathbb{N}^k_{\le n}$.

\renewcommand{\O}{\mathcal{O}}

\begin{theorem}\label{thm:tm-to-cm}
  Let $\varphi$ be a predicate decidable by a log-space bounded randomised Turing machine within $\O(f(n))$ steps in expectation with unary input encoding. There exists an $n$-bounded CM that accepts $\varphi$ within $\O(f(n)\log(n))$ steps in expectation.
\end{theorem}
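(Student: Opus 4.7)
The plan is to encode an RTM configuration in a constant number of CM counters, each bounded by $n$, and to simulate each TM step with $\mathcal{O}(\log n)$ CM operations, using the two transition functions $\mathcal{T}_1,\mathcal{T}_2$ to inject the random bits consumed by the RTM.

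Fix the log-space RTM $M$ deciding $\varphi$ with work-tape length $c\log_2 n$ for some constant $c$. A configuration of $M$ consists of its finite internal state (which I absorb into the CM state set $S$), the work-tape contents (a bit string of length $\le c\log_2 n$), the work-tape head position (at most $c\log_2 n$), and the input-tape head position (at most $n$, since the unary input has length $\sum_i x_i + (l-1) \le n$). I split the work-tape contents into a constant number of blocks, each of length strictly less than $\log_2 n$ bits, so that the counter holding a block stays safely below $n$ and survives any $\Op{mul}_2$; the two head positions take one counter each. The given unary inputs $x_1,\ldots,x_l$ are already counters of value $\le n$.

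For bit-level manipulation I build two primitives out of $\Op{divmod}_2$, $\Op{mul}_2$, $\Op{inc}$ and $\Op{iszero}$. To read the $p$-th bit of a block-counter $C$, I shift $C$ into an auxiliary counter $C'$ by repeated $\Op{divmod}_2$, branching on the remainder at step $p$ to record that bit in the finite control, and then reconstruct $C$ from $C'$ via $\Op{mul}_2$ and $\Op{inc}$; writing a bit is analogous. Both routines take $\mathcal{O}(\log n)$ CM steps. Computing the symbol at input-head position $h$ reduces to a constant number of comparisons between $h$ and prefix sums $x_1+\cdots+x_j+j$, each implementable bit-wise in $\mathcal{O}(\log n)$ CM steps by copying $h$ into an auxiliary counter and repeatedly stripping a low bit while tracking sign in the finite control.

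With these primitives, one TM step is simulated by reading the current work-tape and input-tape symbols, consulting the hard-coded transition table inside $S$, writing the new work-tape symbol, updating both head positions, and moving to the new internal state; a random bit is supplied by picking a state $q$ with $\mathcal{T}_1(q)\neq\mathcal{T}_2(q)$ whose two image tuples branch to the ``$0$'' and ``$1$'' successors. The overall cost per TM step is $\mathcal{O}(\log n)$ CM operations, yielding $\mathcal{O}(f(n)\log n)$ CM steps in expectation; on acceptance (respectively rejection) the CM enters state $1$ (respectively $0$). The main obstacle I expect is the $n$-boundedness bookkeeping: $\Op{mul}_2$ can double a counter, so every intermediate value in every subroutine must provably stay below $n$. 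This forces the slack in the block size above, and forces the comparison and pseudo-subtraction routines on the input-head counter to be designed so that no transient value exceeds $n$, all while preserving the $\mathcal{O}(\log n)$ per-step overhead.
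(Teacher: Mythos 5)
Your overall architecture --- a constant number of $n$-bounded counters holding the work tape in blocks of fewer than $\log_2 n$ bits, plus one counter per head, with each TM step costing $\mathcal{O}(\log n)$ counter operations --- is a genuinely different route from the paper, which first converts the machine to binary input, then to a two-stack tape representation split round-robin into $\mathcal{O}(1)$ stacks of at most $\log_2 n$ symbols each, so that every tape access is an $\mathcal{O}(1)$-cost push/pop of the low-order bit of a counter. But your version has a real gap in the bit-access primitive. You shift a block counter out via repeated $\mathsf{divmod}_2$ and ``branch on the remainder at step $p$'' --- yet $p$ is the work-head position, which ranges over $\Theta(\log n)$ values and therefore cannot be stored in the finite control; it lives in a counter. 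The machine has no decrement, so the obvious way to know when the $p$-th shift has been reached is to count down a copy of the head-position counter, and each such decrement itself costs $\Theta(\log\log n)$ operations of $\mathsf{divmod}_2$/$\mathsf{mul}_2$ (pop bits until the first $1$, rebuild). A single tape access then costs $\Theta(\log n\log\log n)$, and the total becomes $\mathcal{O}(f(n)\log n\log\log n)$, missing the stated bound. This seeking problem is precisely the difficulty the paper's stack representation is designed to sidestep.

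The gap is fixable within your framework, but it needs an explicit mechanism: for instance, maintain per block an indicator counter holding $2^{j}$ where $j$ is the head offset within that block (so the indicator stays below $n$ and a head move is a single $\mathsf{mul}_2$ or $\mathsf{divmod}_2$), and access the target bit by shifting the block and its indicator in lockstep, recognising the position as the step at which the indicator's remainder is $1$. With that --- or by switching to a left-stack/right-stack tape representation --- your per-step cost is genuinely $\mathcal{O}(\log n)$, and the rest of your argument (input-head comparison against prefix sums by parallel bit-stripping, randomness injected by making $\mathcal{T}_1\neq\mathcal{T}_2$ on designated states, block-size slack for $n$-boundedness) goes through. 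Two smaller boundedness points also need patching: the input-head position and the prefix sums $x_1+\dots+x_j+j$ can exceed $n$ by an additive constant, so store them with a fixed offset absorbed into the finite control.
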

\begin{proof}[\Proofsketch]
  This can be shown by first representing the Turing machine by a stack machine with two stacks that contain the tape content to the left/right of the current machine head position. In this representation, head movements and tape updates amount to performing pop/push operations on the stack. Moreover, we can simulate an $c\cdot n$-bounded stack by $c$ many $n$-bounded stacks. An $n$-bounded stack, in turn, can be represented in a counter machine with a constant number of $2^n$-bounded counters. The stack content is represented as the base-$2$ number corresponding to the binary sequence stored in the stack. Popping then amounts to a $\Op{divmod}_2$ operation, and pushing amounts to doubling the counter value, followed by adding $1$ or $0$, respectively.

  A detailed proof can be found in Appendix~\ref{app:proof-thm-tm-to-cm}.
\end{proof}

We formally define two types of BPs, ones that simulate a step of the CM, and ones behaving like a clock.

\begin{definition}\label{def:cmsimulation}
Let BP $\PP=(Q\times G,\delta)$ denote a BP with global states $G$ where $0,1,\perp\in Q$ and $\Cmd,\Ret\subseteq G$. We define the injection $\varphi:G\times\mathbb{N}_{\le n}\rightarrow\mathbb{N}^{Q\times G}$ as $\varphi(j,x)\defeq x\cdot\multiset{(1,j)}+(n-x)\cdot\multiset{(0,j)}$. The configurations in $\varphi(\Cmd\times\mathbb{N})$ are called \emph{initial}, the ones in $\varphi(\Ret\times\mathbb{N})$ \emph{final}. We call a configuration $C$ \emph{failing}, if $C(\perp,i)>0$ for some $i\in G$.

We say that $\PP$ is \emph{CM-simulating} if the sets of final and failing configurations are closed under reachability, and from every initial configuration $\varphi(j,w)$ the only reachable final configuration is $\varphi(\StepE(j,w))$, if both are well-defined.
\end{definition}

\newcommand{\Time}{\operatorname{Time}}
\begin{definition}
Let $\PP=(Q,\delta)$ denote a BP with $0,1\in Q$ and $\Time(\PP)$ the number of steps until $\PP$, starting in configuration $\multiset{0,...,0}$, reaches $\multiset{1,...,1}$, or $\infty$ if it does not. If $\Time(\PP)$ is almost surely finite and no agent is in state $1$ before $\Time(\PP)$, then we call $\PP$ a \emph{clock-BP}.
\end{definition}

Now we begin by constructing a CM-simulating BP. The value of a given counter is scattered across the population: each agent stores its contribution to this counter value in its state. The counter value is the sum of all contributions. Usually, an agent's contribution is either $1$ or $0$, thus $n$ agents can maximally store a counter value equal to
$n$, which is not problematic, since the counter machine is assumed to be $n$-bounded.
The difficult part is multiplying and dividing the counter by two. Besides contributions $0$ and $1$, we will also allow intermediate contributions $\frac{1}{2}$  and $2$. By executing a single broadcast, we can multiply (or divide) all the individual contributions by $2$, by setting all contributions of value $1$ to $\frac{1}{2}$, or $2$, respectively. Then, over time, we “normalise” the agents to all have contribution $0$ or $1$ again in a manner which is specified below. This process takes some time, and we cannot determine with perfect reliability whether it is finished, so we only bound the time with high probability. Here and in the following, we say that some event (dependent on the population size $n$) happens \emph{with high probability}, if for \emph{all} $k>0$ the event happens with probability $1-\mathcal{O}(n^{-k})$.

In this and subsequent lemmata we use $\mathcal{G}(p)$, for $0<p<1$, to denote the geometric distribution, that is the number of \emph{trials} until a coin flip with probability $p$ succeeds, which has expectation $1/p$. We start with a statement about the tail distributions of sums of geometric variables.

\begin{restatable}{lemma}{restateharmonictail}\label{lem:harmonictail}
Let $n\ge 3$ and $X_1,...,X_n$ denote independent random variables with sum $X$ and $X_i\sim\mathcal{G}(i/n)$. Then for any $k\ge 1$ there is an $l$ s.t.\
\[\mathbb{P}(X\ge l\cdot n\ln n)\le n^{-k}\]
\end{restatable}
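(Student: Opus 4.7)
The plan is to recognise $X$ as the completion time of the standard coupon collector's problem on $n$ types and then invoke the familiar union-bound tail estimate. Concretely, imagine drawing coupons uniformly at random from $\{1,\ldots,n\}$, and for $j=1,\ldots,n$ let $T_j$ denote the number of draws needed to see the $j$-th distinct coupon after $j-1$ have already been collected. Each $T_j$ is geometric with success probability $(n-j+1)/n$, and the $T_j$ are independent. Substituting $i=n-j+1$ shows that $T=\sum_j T_j$ has exactly the same distribution as $X=\sum_i X_i$, so it suffices to bound the upper tail of $T$.

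For that tail, observe that for any integer $t\ge 0$ and any fixed coupon $i$, the probability that coupon $i$ is absent from the first $t$ draws is $(1-1/n)^t\le e^{-t/n}$; a union bound over the $n$ coupons yields $\mathbb{P}(T>t)\le n\,e^{-t/n}$. Choosing $t=\lceil l\,n\ln n\rceil$ for a sufficiently large constant $l$ (say $l=k+2$, which absorbs the small rounding loss $e^{1/n}\le e$) then drives the bound to at most $e\cdot n^{1-l}\le n^{-k}$ for all $n\ge 3$, which is precisely the claim.

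No real obstacle arises: the only step worth highlighting is the identification of $X$ with the coupon collector time, which is immediate once the $T_j$ are written down. A direct Chernoff bound on $X$ via the moment generating function of $\mathcal{G}(i/n)$ would yield the same conclusion but requires more bookkeeping to handle the very small success probabilities $i/n$ for $i$ near $1$, whereas the coupon collector viewpoint sidesteps this entirely.
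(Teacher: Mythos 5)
Your proof is correct, but it takes a different route from the paper's. You identify $X$ distributionally with the coupon-collector completion time (valid: the multiset of success probabilities $\{1/n,\dots,n/n\}$ matches and the summands are independent) and then apply the classical union bound $\mathbb{P}(T>t)\le n(1-1/n)^t\le n e^{-t/n}$ over the $n$ events ``coupon $i$ unseen after $t$ draws''; with $l=k+2$ the slack comfortably absorbs the off-by-one/rounding factor $e^{1/n}\le e$ and gives $e\cdot n^{-k-1}\le n^{-k}$ for $n\ge 3$. The paper instead computes $\mathbb{E}(X)=nH_n$ and invokes Janson's general upper-tail bound for sums of independent geometrics (Theorem~\ref{thm:jansen21}), choosing $\lambda$ with $\lambda-1-\ln\lambda=k$ to get $\mathbb{P}(X\ge\lambda nH_n)\le e^{-kH_n}\le n^{-k}$ and then setting $l=2\lambda$ via $H_n\le 2\ln n$. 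Your argument is more elementary and self-contained, exploiting the special structure of this particular sum (that it is the maximum of $n$ first-appearance times, each geometric with parameter $1/n$); the paper's argument is slightly less elementary but reuses the same Janson machinery it already needs for the lower-tail estimate in Lemma~\ref{lem:simulateclock}, keeping the toolkit uniform. Either is acceptable.
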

\begin{proof}
See Appendix~\ref{app:harmonictail}.
\end{proof}

\begin{lemma}\label{lem:simulatepoorly}
There is a CM-simulating BP s.t.\ starting from an initial configuration it reaches a final configuration within $\mathcal{O}(n\log n)$ steps with high probability.
\end{lemma}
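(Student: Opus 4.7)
The plan is to construct the BP in a modular fashion, with a separate sub-protocol for each command in $\Cmd$ sharing a common state layout. I would extend the agents' contribution space to $\{0, 1, 2, \tfrac{1}{2}, \perp\}$ together with a small number of auxiliary handshake-local values, and use the global state to encode the current command and its phase (setup, normalizing, trial-done, or $\Op{done}_b$). An agent that ever enters $\perp$ marks an irrecoverable failure, and configurations containing such an agent form the failing set of Definition~\ref{def:cmsimulation}.

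For $\Op{inc}$, a single broadcast suffices: any $(0, \Op{inc})$ agent moves to $(1, \Op{done}_0)$ while switching the global state; this takes $\mathcal{O}(n/(n-x))$ expected steps and hence $\mathcal{O}(n\log n)$ with high probability, and lands in $\perp$ if $x = n$. For $\Op{iszero}$ with $x \geq 1$, any $(1, \Op{iszero})$ agent analogously commits $\Op{done}_1$ in one broadcast. The non-trivial cases $\Op{mul}_2$ and $\Op{divmod}_2$, together with the $x = 0$ case of $\Op{iszero}$, share a three-phase structure: (i)~a setup broadcast re-tags each contribution $1$ as $2$ (resp.\ $\tfrac{1}{2}$) and enters the normalization phase; (ii)~pairs of intermediate-contribution agents are matched via two-step handshakes modelled on Section~\ref{sec:nondeterministic} — an intermediate-contribution agent first broadcasts to become a waiter and flag every eligible partner as a candidate, then a candidate broadcasts to finalize, turning a $(2,0)$ pair into two $1$s or a $(\tfrac{1}{2},\tfrac{1}{2})$ pair into a $0$ and a $1$ (with any leftover unpaired $\tfrac{1}{2}$ recording the remainder bit of $\Op{divmod}_2$ in the global state); (iii)~a commit broadcast transits the global state to $\Op{done}_b$.

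The runtime of normalization is bounded via Lemma~\ref{lem:harmonictail}: when $k$ intermediate-contribution agents remain, the actor broadcast of the next pairing is geometric with parameter $k/n$ and the responder broadcast has a similar geometric parameter, so both the actor-sum and the responder-sum over all pairings are stochastically dominated by $\sum_{i=1}^{n} \mathcal{G}(i/n)$, which is $\mathcal{O}(n \log n)$ with high probability by that lemma. The main obstacle will be completion detection, because no agent can directly observe the absence of intermediate contributions. I would handle this through a trial/commit mechanism: a non-intermediate broadcast in the normalization phase switches the global state to trial-done; any surviving intermediate agent then broadcasts to abort back to normalization, while a follow-up non-intermediate broadcast in trial-done that has not been aborted commits to $\Op{done}_b$, after a waiting window tuned to $\mathcal{O}(n \log n)$ steps (again bounded via the same harmonic-sum estimate) so that any surviving intermediate would have broadcast with high probability during it. Because normalization itself completes in $\mathcal{O}(n \log n)$ steps with high probability, the expected number of aborted trials is $\mathcal{O}(1)$, and assembling the pieces yields the claimed $\mathcal{O}(n \log n)$ bound.
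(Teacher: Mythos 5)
There is a genuine gap, and it sits exactly where you flag ``the main obstacle'': your trial/commit completion-detection mechanism. You propose that a commit happens only ``after a waiting window tuned to $\mathcal{O}(n\log n)$ steps'' so that any surviving intermediate agent would have aborted. But Lemma~\ref{lem:harmonictail} is a tail bound on a random variable, not a protocol: a finite-state BP has no way to \emph{count} $\Theta(n\log n)$ steps without an explicit clock sub-protocol, and constructing such a clock is precisely the content of Lemmas~\ref{lem:simulateclock} and~\ref{lem:simulateclockwell}, which are only combined with the step-simulator later, in Lemma~\ref{lem:simulatewell}. Without a clock, the ``follow-up non-intermediate broadcast'' that commits arrives after $\mathcal{O}(1)$ expected steps, whereas a lone surviving intermediate agent broadcasts its abort with probability only about $1/n$ per step — so the commit is premature with probability close to $1$, not with negligible probability. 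Depending on how you wire the commit, this either produces a reachable final configuration different from $\varphi(\StepE(j,w))$ (violating the zero-error safety clause of Definition~\ref{def:cmsimulation}, which must hold with probability $1$), or the commit is harmless because normalization can continue afterwards — in which case the entire trial/commit machinery does nothing. The resolution in the paper is that \emph{no completion detection is needed for this lemma}: Definition~\ref{def:cmsimulation} only requires that the configuration \emph{be} final once all contributions are canonical, not that any agent observe this. The paper's protocol moves the global state to $\mathsf{done}_b$ immediately and lets normalization finish silently; detecting termination (with high probability, via the clock, and with an explicit failure state $\perp$ when the clock is wrong) is deferred entirely to Lemma~\ref{lem:simulatewell}.

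A secondary, related issue is your pairwise-handshake normalization. The runtime analysis via domination by $\sum_i\mathcal{G}(i/n)$ is sound, but detecting the ``leftover unpaired $\frac12$'' for the remainder bit of $\mathsf{divmod}_2$ is itself an absence-detection problem: a lone $\frac12$ seeking a partner cannot know that none exists, so you would again need a clock or a trial/commit, with the same premature-conclusion risk. The paper sidesteps pairing altogether by normalizing agents one at a time and using the global state as a one-bit carry/borrow: a $2$-agent becomes $1$ and raises a global $\mathsf{high}$ flag which a $0$-agent later absorbs by becoming $1$; a $\frac12$-agent toggles the global state between $\mathsf{done}_0$ and $\mathsf{done}_1$ while becoming $0$ or $1$, so the final parity of these toggles encodes the remainder automatically, with no agent ever needing to learn that it was the last one. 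Your $\mathsf{inc}$ and $\mathsf{iszero}$ (for $x\ge 1$) cases are fine and match the paper's in spirit, but as written the proposal does not yield a correct CM-simulating BP for $\mathsf{mul}_2$ and $\mathsf{divmod}_2$.
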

\begin{proof}
Let $\PP=(Q\times G,\delta)$ denote our BP, with $Q\defeq\{0,\frac{1}{2},1,2,*\}$ and $G\defeq\Cmd\cup\Ret\cup\{\Op{high}\}$. The following transitions initialise the computation, with $b\in\{0,1\}$:
\begin{align}
& (b,\Op{mul}_2)\mapsto(2b,\Op{done}_0),\,\{1\mapsto2,0\mapsto0\} \tag{$\alpha_1$}\\
& (b,\Op{divmod}_2)\mapsto(\tfrac{b}{2},\Op{done}_0),\,\{1\mapsto\tfrac{1}{2},0\mapsto0\} \tag{$\alpha_2$}\\
& (b,\Op{inc})\mapsto(b,\Op{high}),\,\emptyset \tag{$\alpha_3$}
\end{align}
Additionally, we need transitions that move agents back into states $0$ and $1$.
\begin{align}
& (0,\Op{high})\mapsto(1,\Op{done}_0),\,\emptyset \tag{$\beta_1$}\\
& (2,\Op{done}_0)\mapsto(1,\Op{high}),\,\emptyset \tag{$\beta_2$}\\
& (\tfrac{1}{2},\Op{done_0})\mapsto(0,\Op{done}_1),\,\emptyset \tag{$\beta_3$}\\
& (\tfrac{1}{2},\Op{done_1})\mapsto(1,\Op{done}_0),\,\emptyset \tag{$\beta_4$}
\end{align}
This requires some explanation. Basically, we have the invariant that for a configuration $C$ the current value of the counter is $b+\sum_{i\in Q,j\in G}i\cdot C((i,j))$, where $b$ is $1$ if the global state is $\Op{high}$ and $0$ else. There is a “canonical” representation of each counter value, where $b=0$ and the individual contributions $i\in Q$ are only $0$ and $1$. The transitions ($\alpha_1$-$\alpha_3$) update the represented counter value in a single step, but cause a “noncanonical” representation. The transitions ($\beta_1$-$\beta_4$) preserve the value of the counter and cause the representation to eventually become canonical.

This corresponds to final configurations from Definition~\ref{def:cmsimulation}: as long as the representation is noncanonical, i.e.\ an agent with value $\frac{1}{2}$, $2$ or $*$ exists, the configuration is not final. Conversely, once we reach a final configuration our representation is canonical, and, as the value of the counter is preserved, we reach the correct final configuration.
\begin{align}
& (1,\Op{iszero})\mapsto(1,\Op{done}_1),\,\emptyset \tag{$\alpha_4$}\\
& (0,\Op{iszero})\mapsto(0,\Op{done}_0),\,\{1\mapsto *\} \tag{$\alpha_5$}\\
& (*,\Op{done}_0)\mapsto(1,\Op{done}_1),\,\{*\mapsto 1\} \tag{$\beta_5$}
\end{align}
For $\Op{iszero}$ we do something similar, but the value of the counter does not change. If the initial transition is executed by an agent with value $1$, we can go to the global state $\Op{done}_1$ directly. Otherwise, we replace $1$ by $*$ and go to $\Op{done}_0$, so if no agents with value $1$ exist, we are finished. Else some agent with value $*$ executes ($\beta_5$) and we move to the correct final configuration.

Final configurations can only contain states $\{0,1\}\times\Ret$. As we have no outgoing transitions from those states, they are indeed closed under reachability.

It remains to be shown that starting from an initial configuration $C_0$ we reach a final configuration within $\mathcal{O}(n\log n)$ steps with high probability. First, note that transitions ($\alpha_1$-$\alpha_5$) are executed at most once. Moreover, these are the only transitions enabled at $C_0$, so let $C_1$ denote the successor configuration after executing ($\alpha_1$-$\alpha_5$), i.e.\ $C_0\rightarrow C_1$. From now on, we consider only transitions ($\beta_1$-$\beta_5$).

Let $M\defeq\{\tfrac{1}{2},2,*\}\times G$ denote the set of “noncanonical” states, and, for a configuration $C$, let $\Phi(C)\defeq 2\sum_{q\in M}C(q)+b$ denote a potential function, with $b$ being $1$ if the global state of $C$ is $\Op{high}$ and 0 else. Now we can observe that executing a ($\beta_1$-$\beta_5$) transition strictly decreases $\Phi$, and that $0\le\Phi(C)\le 2n$ for any configuration $C$. So after at most $2n$ non-silent transitions, we have reached a final configuration.

Fix some transition ($\beta_j$), let $q\in Q\times G$ denote the state initiating ($\beta_j$), and let $C,C',C''$ denote configurations with $C\trans[3pt]{\beta_j}C'\trans[3pt]{*}C''$, meaning that $C''$ is a configuration reachable from $C$ after executing ($\beta_j$). Then, we claim, $C(q)>C''(q)$.

To see that this holds for transitions ($\beta_2$-$\beta_5$), note that for $i\in\{\frac{1}{2},2,*\}$ the number of agents with value $i$ can only decrease when executing transitions ($\beta_1$-$\beta_5$). For ($\beta_1$) this is slightly more complicated, as ($\beta_3$) increases the number of agents with value $0$. However, ($\beta_1$) is reachable only after ($\alpha_1$) or ($\alpha_3$) has been executed, while ($\beta_3$) requires ($\alpha_2$). Thus, our claim follows.

Let $X_k$ denote the number of silent transitions before executing ($\beta_j$) for the $k$-th time, $k=1,...,l$, and let $r_k$ denote the number of agents in state $q$ at that time. Then $n\ge r_1>r_2>...>r_l\ge 1$ and $X_k$ is distributed according to $\mathcal{G}(r_k/n)$. So we can use Lemma~\ref{lem:harmonictail} to show that the sum of $X_k$ is $\mathcal{O}(n\log n)$ with high probability. There are only $5$ transitions ($\beta_j$), so the same holds for the total number of steps until reaching a final state.
\end{proof}

Our next construction is the clock-BP, which indicates that some amount of time has passed (with high probability). Angluin, Aspnes and Eisenstat used epidemics for this purpose \cite{angluin2008fast}, as do we. The idea is that one agent initiates an epidemic and waits until it sees an infected agent for the first time. Similar to standard analysis of the coupon collector's problem, this is likely to take $\Theta(n\log n)$ time. In the remainder of the proof, we use $\mathbb{P}(x)$ to denote the probability of a stochastic event $x$. By $\mathbb{E}(X)$ we denote the expected value of random variable $X$.

\begin{lemma}\label{lem:simulateclock}
There is a clock-BP $\PP=(Q,\delta)$ s.t.\ $\mathbb{E}(\Time(\PP))\in\mathcal{O}(n\log n)$ and $\Time(\PP)\in\Omega(n\log n)$ with probability $1-\mathcal{O}(n^{-1/2})$.
\end{lemma}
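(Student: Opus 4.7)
The plan is to construct a clock-BP that encodes an AAE-style pairwise epidemic~\cite{angluin2008fast} via a broadcast handshake. The BP will use a constant number of states, containing in particular $0$, a leader $L$, uninfected $U$, infected $I$, and $1$, along with a small set of global sub-states. From the initial $\multiset{0,\ldots,0}$, the first broadcast (of the form $0 \mapsto L$, $\{0 \mapsto U\}$) produces in one step a single leader $L$ and $n-1$ agents in state $U$. Thereafter, whenever $L$ is picked the protocol enters a ``check-and-infect'' global sub-phase, and whenever an $I$ is picked it enters an ``infect'' sub-phase; in either case the next randomly picked agent either infects itself (if it is a $U$), fires the clock (if it is an $I$, and only when the sub-phase is check-and-infect), or reverts the sub-phase with no effect. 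Firing uses a response $\{x \mapsto 1 : x \in Q\}$, converting every agent to $1$ in one broadcast.

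A direct calculation then shows that if $k$ is the current number of infected agents, the per-step rate of infection is $\Theta((1+k)(n-1-k)/n^2)$ and the per-step rate of firing is $\Theta(k/n^2)$; these are (up to constants) the rates driving the pairwise AAE epidemic. The upper bound $\mathbb{E}(\Time(\PP)) = \mathcal{O}(n \log n)$ will then follow from the classical analysis: the expected time until $k$ reaches $n{-}1$ is $\Theta(\sum_{k=1}^{n-1} n^2/(k(n-k))) = \Theta(n \log n)$, and firing contributes only $\mathcal{O}(n)$ more.

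For the lower bound I work in the log-phase of the epidemic, where $\mathbb{E}[k(t)] \approx e^{t/n}$. A first-moment calculation bounds the expected number of firing events by time $\tau \le \tfrac{1}{2} n \ln n$ as $\int_0^\tau \mathbb{E}[k(s)]/n^2 \, ds = \mathcal{O}(e^{\tau/n}/n) = \mathcal{O}(n^{-1/2})$, so Markov's inequality gives $\mathbb{P}(\Time(\PP) < \tfrac{1}{2} n \ln n) = \mathcal{O}(n^{-1/2})$. By construction no agent reaches state $1$ before the firing broadcast, so this is the required tail.

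The hard part will be turning the heuristic $\mathbb{E}[k(t)] \approx e^{t/n}$ into a bound sharp enough for the $\mathcal{O}(n^{-1/2})$ tail: a plain Chebyshev argument on the sum of geometric inter-infection waiting times alone only gives an $\mathcal{O}(\log^{-1/2} n)$ tail, so a Chernoff-type concentration bound on the early log-phase growth of $k(t)$, or a coupling with a pure-birth Yule process, will probably be needed. Additionally, the handshake between the two sub-phases and the global-state reverts must be designed so that no revert can leave a stale $I$ or $1$ behind that would violate the clock-BP definition.
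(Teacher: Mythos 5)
Your proposal is sound, but it takes a genuinely different route from the paper, both in the construction and in the tail analysis. The paper's clock is a \emph{globally synchronised} epidemic: a broadcast activates all uninfected agents at once, a single activated agent then becomes infected (deactivating the rest), and the clock fires when the distinguished leader is the one selected to become infected. This synchronisation buys an exact distributional decomposition: by symmetry the leader is the $i$-th infected agent with probability $1/n$, and conditioned on that, $\Time(\PP)$ is an explicit sum of \emph{independent} geometrics $\mathcal{G}(j/n)+\mathcal{G}(1-j/n)$, so the upper bound is a harmonic-sum computation and the lower tail follows from a ready-made lower-tail bound for sums of geometrics (after discarding the probability-$n^{-1/2}$ event $i\le\sqrt{n}$). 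Your handshake-simulated pairwise epidemic, with the leader firing upon \emph{meeting} an infected agent, does not admit such a clean decomposition, but your first-moment bound on the number of firing opportunities replaces it adequately. One substantive correction: the ``hard part'' you flag at the end is not actually hard, and no Chernoff bound, Chebyshev argument or Yule-process coupling is needed. Markov's inequality applied to the \emph{count} of firing events only requires an upper bound on $\mathbb{E}[k_t]$, not concentration of $k_t$; and since the per-step infection probability is at most $(1+k_t)/n$, the recursion $\mathbb{E}[1+k_{t+1}]\le(1+\tfrac{1}{n})\,\mathbb{E}[1+k_t]$ gives $\mathbb{E}[1+k_t]\le 2e^{t/n}$ rigorously, which is exactly the estimate your integral heuristic presupposes. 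What remains is only the routine bookkeeping you already identify: the sub-phase handshake must be realised with global states so that a revert leaves no stale marker, and the firing broadcast must move all agents to state $1$ in a single step so that the clock-BP condition (no agent in state $1$ before $\Time(\PP)$) holds.
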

\begin{proof}
For a clock we use states $\{0, 1, c_1, c_2, c_3, c_1^+, c_2^+\}$ and transitions

\begin{align}
& 0\mapsto c_1^+, \{0\mapsto c_2^+\} \tag{$\alpha$}\\
& c_2^+\mapsto c_3, \{c_2^+\mapsto c_2, c_1^+\mapsto c_1\} \tag{$\beta$}\\
& c_3\mapsto c_3, \{c_2\mapsto c_2^+, c_1\mapsto c_1^+\} \tag{$\gamma$}\\
& c_1^+\mapsto 1, \{c_2^+\mapsto 1, c_3\mapsto 1\} \tag{$\omega$}
\end{align}

State $0$ is the initial state, $1$ the final state. States $c_1$ and $c_2$ denote “uninfected” agents, state $c_3$ “infected” ones. The former can become activated (moving to $c_1^+$ and $c_2^+$), causing one of them to become infected. Transition ($\alpha$) marks a leader $c_1$, once they are infected the clock ends (via ($\omega$)). In ($\beta$), a single activated agent becomes infected, deactivating the other agents. They get activated again via transition ($\gamma$). The state diagram is shown in Figure~\ref{fig:infection}.

\begin{figure}[t]
\centering
\def\svgwidth{11cm}\import{figures/}{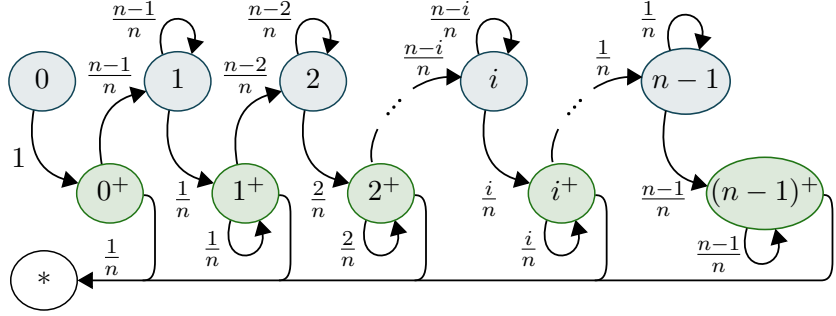}
\caption{State diagram of the clock implementation. Nodes with $i$ agents in state $c_3$ are labelled $i$ or $i^+$, the latter denoting that the other agents are in states $c_1^+$ and $c_2^+$. The final state $\ast$ has all agents in state $1$. Arcs are labelled with transition probabilities.}
\label{fig:infection}
\end{figure}

Let $A_i$ denote the event that the leader is the $i$-th infected agent, $i=1,...,n$, i.e.\ the last non-final state we visit in Figure~\ref{fig:infection} is $(i-1)^+$. Due to symmetry, we have $\mathbb{P}(A_i)=1/n$ for all $i$. Let us now condition on $A_i$. The total number of interactions is $X\defeq 2+\sum_{j<i} (r_j+r_j^+)$, where $r_j$ ($r_j^+$) is the number of times the self-loop of state $j$ ($j^+$) is taken, plus one. So $r_j$ and $r_j^+$ are independent random variables, with $r_j\sim\mathcal{G}(j/n)$ and $r_j^+\sim\mathcal{G}(1-j/n)$ (recall that $\mathcal{G}(p)$ is the geometric distribution with mean $1/p$). For the expected value we get $\mathbb{E}(X)=2+\sum_{j<i} (n/j+n/(n-j))$. Using $H_n$ for the $n$-th harmonic number, this is $2+n\cdot(H_{i-1}+H_{n-1}-H_{n-i})\le 4+2n\ln n$. Thus the clock takes $\mathcal{O}(n\log n)$ steps in expectation.

If the leader is one of the first agents infected, the number of steps is too low. For our analysis to give a good bound we will exclude this case, i.e.\ assume that $i>\sqrt{n}$.

Then we have $X\ge r_1+...+r_{\lfloor\sqrt{n}\rfloor}$ and apply Theorem~\ref{thm:jansen31}. Due to $\mathbb{E}(X)\ge n/2\cdot\ln n$
\footnote{This is not quite obvious: $\mathbb{E}(X)/n\ge H_{\sqrt{n}-1}\ge \ln(\sqrt{n}-1)+1/2\ge \ln(\sqrt{n})$, for $n\ge 3$.}
and $1/7-1-\ln 1/7\ge 1$, we get
\[\mathbb{P}(X\le n/14\cdot\ln n)\le e^{-1/n\,\cdot\, n/2\ln n}=n^{-1/2}\]

With probability at most $n^{-1/2}$ we have $i\le\sqrt{n}$, otherwise we get the above bound. So in total $X> n/14\cdot\ln n$ with probability at least $1-2n^{-1/2}$.
\end{proof}

While the above clock measures some interval of time with some reliability, we want a clock that measures an “arbitrarily long” interval with “arbitrarily high” reliability. Constructions for population protocols use phase clocks for this purpose, but broadcasts allow us to synchronise the agents, so we can directly execute the clock multiple times in sequence instead.

\begin{restatable}{lemma}{restatesimulateclockwell}\label{lem:simulateclockwell}
Let $k\in\N$ denote some constant. Then there is a clock-BP $\PP$ s.t.\ $\mathbb{E}(\Time(\PP))\in\mathcal{O}(n\log n)$, and $\Time(\PP)<kn\log n$ with probability $\mathcal{O}(n^{-k})$.
\end{restatable}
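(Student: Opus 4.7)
My plan is to construct the desired clock-BP by composing $m$ independent copies of the clock-BP $\PP_0$ from Lemma~\ref{lem:simulateclock} in sequence, where $m$ is a constant depending on $k$. Concretely, I would take the disjoint union of $m$ copies of $\PP_0$'s state space (one per phase), identifying the initial $0$-state of the composition with the $0$-state of phase~$1$ and the final $1$-state with the $1$-state of phase~$m$. For $i<m$, the $(\omega)$-broadcast that normally ends phase $i$ is re-routed so that the broadcaster and all receivers move to the initial $0$-state of phase $i{+}1$ rather than to the global state $1$. Since $\PP_0$'s terminating broadcast synchronises all agents, the entire population moves en masse from phase to phase, so each phase begins in its own all-zero configuration and no agent ever enters state $1$ until the last phase ends. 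Thus $\PP$ is a clock-BP whose phase durations $T_1,\dots,T_m$ are i.i.d.\ copies of $\Time(\PP_0)$.

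The expected time is immediate: $\mathbb{E}(\Time(\PP)) = \sum_{i=1}^{m} \mathbb{E}(T_i) = m\cdot\mathcal{O}(n\log n) = \mathcal{O}(n\log n)$, since $m$ is constant. For the tail bound, set $\alpha := 1/14$ and call phase $i$ \emph{good} if $T_i \geq \alpha n\ln n$. By Lemma~\ref{lem:simulateclock}, each phase is good independently with probability at least $1 - 2n^{-1/2}$. Choose $j := \lceil k/\alpha\rceil$ and $m := j + 2k - 1$; then the occurrence of at least $j$ good phases already forces $\Time(\PP) \geq j\alpha n\ln n \geq kn\ln n$. The complementary event is that at least $m - j + 1 = 2k$ phases are bad, whose probability, by a binomial tail over independent Bernoullis, is at most $\binom{m}{2k}(2n^{-1/2})^{2k} = \mathcal{O}(n^{-k})$.

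The main obstacle is the probability amplification from the weak $\mathcal{O}(n^{-1/2})$ per-clock failure bound of Lemma~\ref{lem:simulateclock} to the required $\mathcal{O}(n^{-k})$. The crucial observation is that running independent phases sequentially turns the count of short phases into a sum of independent Bernoullis, whose binomial tail decays much faster than the individual failure rate; this is what allows $m$ to remain constant in $n$, which is in turn necessary to preserve the $\mathcal{O}(n\log n)$ expectation. A secondary technical point is to verify that the composition really is a clock-BP in the sense of the paper's definition, which is why I re-route only the intermediate $(\omega)$-broadcasts: this guarantees that state $1$ is untouched until the very last broadcast of phase $m$ fires.
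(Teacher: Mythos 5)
Your proposal is correct and follows essentially the same route as the paper: sequentially compose a constant number of copies of the clock from Lemma~\ref{lem:simulateclock}, using the synchronising terminating broadcast to move all agents into the next phase, and lower-bound the total time by the number of phases lasting at least $n\ln n/14$ steps. The only difference is bookkeeping --- the paper runs $28k^2$ clocks in $14k$ groups of $2k$ and union-bounds over the groups, while you run $16k-1$ clocks and apply a direct binomial tail to the number of short phases; both yield the $\mathcal{O}(n^{-k})$ failure bound.
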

\begin{proof}[\Proofsketch]
The idea is that we run $28k^2$ clocks in sequence, in groups of $2k$. Then it is likely that at least one clock in each group works, yielding the overall minimum running time. The full proof can be found in Appendix~\ref{app:simulateclockwell}.
\end{proof}

As mentioned earlier, we combine the clock with the construction in Lemma~\ref{lem:simulatepoorly}. While we cannot reliably determine whether the operation has finished, we can use a clock to measure an interval of time that is sufficiently long for the protocol to terminate with high probability. The next construction does just that. In particular, in contrast to Lemma~\ref{lem:simulatepoorly}, it uses its global state to indicate that it is done.

\begin{lemma}\label{lem:simulatewell}
There is a CM-simulating BP s.t.\ starting from an initial configuration it reaches either a final or a failing configuration $C$ almost surely and within $\mathcal{O}(n\log n)$ steps in expectation, and $C$ is final with high probability. Additionally, all reachable configurations with global state in $\Ret$ are final or failing.
\end{lemma}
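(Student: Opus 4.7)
The plan is to compose the CM-simulating BP $\PP_1$ from Lemma~\ref{lem:simulatepoorly} with a clock-BP $\PP_2$ from Lemma~\ref{lem:simulateclockwell} (taking $k$ in the clock sufficiently large), running both in parallel and using the clock's terminating signal as the single moment at which the combined BP commits to a final or failing configuration. The intuition is that, by the time the clock rings, $\PP_1$ has with high probability finished its cleanup, so committing at that point yields the correct final configuration; in the rare event that $\PP_1$ is not yet done, we instead produce a failing configuration.

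Concretely, each agent carries a product state storing its $\PP_1$-local component, its $\PP_2$-state, and a synchronised local copy of $\PP_1$'s global state $g_1 \in G_1$ (consistent across agents because every change to $g_1$ is caused by a $\PP_1$-broadcast that touches every agent). Each broadcast uses a synthetic coin (Section~\ref{sec:nondeterministic}) to choose between a $\PP_1$- and a $\PP_2$-transition at constant-factor overhead. Crucially, neither $\PP_1$- nor $\PP_2$-transitions alter the outer global state of the combined BP, which stays in $\Cmd$ throughout the simulation. Only the clock's terminating broadcast (the $\omega$-transition of Lemma~\ref{lem:simulateclock}) is augmented to perform the commit atomically: the broadcasting agent inspects its stored $g_1$; if $g_1 \in \Ret$ it sets the outer global state to $g_1$, otherwise it enters $\perp$; the accompanying response function sends every receiver whose $\PP_1$-component lies in $\{\tfrac{1}{2}, 2, *\}$ to $\perp$, and collapses canonical agents to plain $(0, g_1)$ or $(1, g_1)$ states.

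For the three required properties: after the commit, every $\PP_1$- and $\PP_2$-transition is disabled and $\perp$ is a trap, so final and failing configurations are both closed under reachability, and every reachable configuration whose outer global state lies in $\Ret$ is either final or failing. Almost-sure termination follows from the almost-sure finiteness of $\Time(\PP_2)$. The expected runtime is dominated by $\mathbb{E}[\Time(\PP_2)] \in \mathcal{O}(n \log n)$ from Lemma~\ref{lem:simulateclockwell}, together with the constant-factor overhead of the synthetic coin. For the high-probability claim, choose $k$ large enough that, with probability $1 - \mathcal{O}(n^{-k'})$ for any fixed $k'$, the clock does not fire before $kn \log n$ steps (Lemma~\ref{lem:simulateclockwell}) while $\PP_1$ reaches a canonical configuration within $\mathcal{O}(n \log n) \le kn \log n$ steps with high probability (Lemma~\ref{lem:simulatepoorly}); at commit time $g_1 \in \Ret$ and no non-canonical agent remains, so the committed configuration is final.

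The main obstacle is matching the strict form required by Definition~\ref{def:cmsimulation}: initial configurations must consist only of $(0, j)$ and $(1, j)$ agents, and final ones only of $(0, j')$ and $(1, j')$ agents with $j' \in \Ret$. This is handled by treating the initial agent state $(b, j)$ as implicitly carrying $q_2 = c_0$ and $g_1 = j$, and by letting the commit step project each surviving agent back to $(b, g_1)$ while failing agents become $\perp$. The commit itself is implementable as a single broadcast because every piece of information it needs — the value of $g_1$ to be installed globally, and the canonicity test for each receiver — is locally available in the product state.
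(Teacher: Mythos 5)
Your proposal is correct and follows essentially the same route as the paper: a parallel composition of the poor simulator (Lemma~\ref{lem:simulatepoorly}) with the amplified clock (Lemma~\ref{lem:simulateclockwell}), with the clock's terminating broadcast acting as the commit that either produces the final configuration or sends the remaining non-canonical agents to $\perp$. The one substantive difference is the composition mechanism: you interleave the two sub-protocols via the synthetic coin of Section~\ref{sec:nondeterministic}, so each sub-protocol only receives a random subset of the interactions and the whp timing guarantees of Lemmas~\ref{lem:simulatepoorly} and~\ref{lem:simulateclockwell} only transfer after an additional (routine, Chernoff-type) thinning argument to relate sub-protocol steps to total interactions. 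The paper instead runs both protocols in lock-step, with every broadcast simultaneously executing one transition of each (as in Appendix~\ref{app:bppforboolean}), so the two step counts coincide exactly and the constants $c$ and $k$ can be matched directly; your replicated local copy of $\PP_1$'s global state versus the paper's renaming of $G_1$ into $j_\circ$ is a purely cosmetic difference.
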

\begin{proof}
Fix some $k\in\mathbb{N}$ and let $\PP=(Q\times G,\delta)$ denote the BP we want to construct. Further, let $\PP_1=(Q_1\times G_1,\delta_1)$ denote the BP from Lemma~\ref{lem:simulatepoorly} and choose some $c$ s.t.\ $\PP_1$ reaches a final configuration after at most $cn\log n$ steps with probability at least $1-n^{-k}$.

Now we use Lemma~\ref{lem:simulateclockwell} to get a clock $\PP_2=(Q_2,\delta_2)$ that runs for at least $cn\log n$ steps with probability at least $1-n^{-k}$.

We do a parallel composition of $\PP_1$ and $\PP_2$ to get $\PP$ (as in Appendix~\ref{app:bppforboolean}). In particular, $Q\defeq Q_1\times Q_2$, $G\defeq \{j_\circ:j\in G_1\}\cup\Ret$, where for $Q$ we identify $(i,0)$ with $i$ for $i\in\{0,1\perp\}$, and for $G$ we identify $j$ with $j_\circ$ for $j\in\Cmd$.

Intuitively, we use $\circ$ to rename the global states of $\PP_1$, meaning that the global state $j\in G_1$ of $\PP_1$ is now called $j_\circ$ in our protocol. We want $\PP_1$ to start with the same initial state we have, which is why we identified $j$ with $j_\circ$ for $j\in\Cmd$. However, we only want to enter a final configurations once the clock has run out, so the completion statuses of $\PP_1$ are renamed into $j_\circ$ for $j\in\Ret$ and we enter a final configuration by setting to global state to a $j\in\Ret$.

For each $(q_1,j)\in Q_1\times G_1$ and $q_2\in Q_2$ with $\delta_1(q_1,j)=((r_1,j'),f_1)$ and $\delta_2(q_2)=(r_2,f_2)$ we get the transition
\begin{align}
(q_1,q_2,j_\circ)\mapsto(r_1,r_2,j_\circ'),\,\{(t_1,t_2)\mapsto(f_1(t_1),f(t_2)):t_1\in Q_1, t_2\in Q_2\}\tag{$\alpha$}
\end{align}
These transitions, together with the way we identified states, ensure that $\PP_1$ and $\PP_2$ run normally, with the input being passed through to $\PP_1$ transparently. However, note that the final configurations of $\PP_1$ are not final for $\PP$, meaning that the protocol never ends. Hence, for $q_1\in Q_1,j\in\Ret$ we add the transition
\begin{gather}
\begin{aligned}
(q_1,1,j_\circ)\mapsto(q_1,0,j),\,&\{(b,1)\mapsto(b,0):b\in\{0,1\}\}\\
&\cup\{(i,1)\mapsto(\perp,0):i\in Q_1\setminus\{0,1\}\}
\end{aligned}
\tag{$\beta$}
\end{gather}
This terminates the protocol once the clock has run out. If $\PP_1$ was in a final state, we will now enter a final state as well, else we move into a failing state.
\end{proof}

Finally, we use the above BP to simulate the full $l$-CM.

\begin{restatable}{lemma}{restatecminpolybcp}\label{lem:cminpolybcp}
Fix some predicate $\varphi:\mathbb{N}^k\rightarrow\{0,1\}$ computable by an $n$-bounded $l$-CM within $\mathcal{O}(f(n))\subseteq\mathcal{O}(\operatorname{poly}n)$ steps. Then there is a BCP computing $\varphi$ in $\mathcal{O}(f(n)\,n\log n)$ steps.
\end{restatable}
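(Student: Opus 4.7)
The plan is to construct the BCP as a synchronised product of $l$ independent copies of the CM-simulating broadcast protocol from Lemma~\ref{lem:simulatewell} (one per counter), combined with a finite global state tracking the current CM state $q \in S$, the active counter index, and an auxiliary bit $\circ \in \{1, 2\}$ supplied by the synthetic-coin construction from Section~\ref{sec:nondeterministic}. Each agent additionally carries a read-only component recording its original input symbol, so that the initial CM-configuration can be regenerated from any reachable BCP-configuration after a failure. The input mapping is the obvious one (an agent with input $x_j$ contributes $1$ to counter $j$ and $0$ to every other counter), and the accepting set $O$ consists of exactly those states whose global CM-state component equals $1$.

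A single simulated CM-transition then proceeds in four phases, each initiated by a single broadcast. First, flip the synthetic coin to obtain $\circ$. Second, look up $\mathcal{T}_\circ(q) = (i, j, q'_0, q'_1)$ and write the command $j \in \Cmd$ into the $i$-th sub-protocol's command slot, leaving the other $l-1$ sub-protocols dormant. Third, run the $i$-th CM-simulating BP until its global state enters $\Ret$, which by Lemma~\ref{lem:simulatewell} happens almost surely, in $\mathcal{O}(n\log n)$ expected steps, and yields a final (as opposed to failing) configuration with high probability. Fourth, act on the outcome: if the sub-protocol signalled completion status $\Op{done}_b$, a broadcast updates the global CM-state to $q'_b$ and the loop repeats; if instead the sub-protocol reached a failing configuration, a broadcast initiates a full reset that restores every agent's $l$ counter contributions from its stored original input and reinstalls the initial CM-state $\Op{init}$.

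For the running-time bound, a single simulated step costs $\mathcal{O}(n\log n)$ expected interactions, as the synthetic-coin flip, command broadcast, sub-protocol run, and state-update broadcast all fit within this bound. An attempt that encounters no resets therefore costs $\mathcal{O}(f(n)\,n\log n)$ in expectation. Choosing the high-probability parameter $k$ of Lemma~\ref{lem:simulatewell} strictly larger than the polynomial degree of $f(n)$, a union bound over the $\mathcal{O}(f(n))$ simulated steps bounds the probability that any single sub-protocol run enters a failing configuration by, say, $1/2$, so the expected number of attempts is $\mathcal{O}(1)$ and the total expected cost remains $\mathcal{O}(f(n)\,n\log n)$. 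Almost-sure stabilisation to $\varphi(x)$ follows because the CM is zero-error, every reset-free attempt reproduces its execution faithfully, and the probability of suffering infinitely many resets is zero. The main technical obstacle is the careful handling of failing configurations: the reset broadcast must propagate atomically across all $l$ counter sub-protocols and the global CM-state at once, so that the protocol cannot settle into a spurious $0$- or $1$-consensus during or just after a failure, and the read-only original-input component must be preserved across resets while the writable counter components are cleanly restored.
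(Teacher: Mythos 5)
Your proposal is correct and follows essentially the same route as the paper: simulate each CM step with the BP of Lemma~\ref{lem:simulatewell}, draw $\circ$ via the synthetic-coin construction of Section~\ref{sec:nondeterministic}, keep a read-only copy of each agent's input so that any $\perp$-agent can broadcast a global reset, and tune the per-step failure probability to $\mathcal{O}(n^{-r-1})$ for $f\in\mathcal{O}(n^r)$ so that restarts cost only a constant factor. The one (immaterial) structural difference is that the paper loads the active counter into a single shared \emph{working register} and writes it back afterwards, rather than running $l$ permanently assigned parallel copies of the sub-protocol as you do.
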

\begin{proof}[\Proofsketch]
For each counter we need $n$ agents, so $ln$ in total, but we can simply have each agent simulate a constant number of agents. To execute a step of the CM, we use the BP from Lemma~\ref{lem:simulatewell}. It succeeds only with high probability, but in the case of failure at least one agent will have local state $\perp$, from which that agent initiates a restart of the whole computation.

As the CM takes only a polynomial number of steps, we can fix a $k$ s.t.\ a computation of our BCP without failures (i.e.\ one that succeeds on the first try) takes $\mathcal{O}(n^k)$ steps. A single step succeeds with high probability, so we can require it to fail with probability at most $\mathcal{O}(n^{-k-1})$. In total, the restarts increase the running time by a factor of $1/(1-\mathcal{O}(n^{-1}))$, which is only a constant overhead.

The full proof can be found in Appendix~\ref{app:cminpolybcp}.
\end{proof}

This completes the proof of Theorem~\ref{thm:polybppiszl}. By Theorem~\ref{thm:tm-to-cm}, each predicate in $\mathsf{ZPL}$ (with input encoded in unary) is computable by a bounded $l$-CM. Lemma~\ref{lem:cminpolybcp} then yields a polynomial-time BCP for that predicate.

We remark that our reductions also enable us to construct efficient BPPs for specific predicates. The predicate \textsc{PowerOfTwo} for example, as described in \cite[Proposition~3]{blondin2019expressive}, can trivially be decided by an $\O(\log n)$-time bounded RTM with input encoded as binary, so there is also a BCP computing that predicate within $\O(n\log^2n)$ interactions.

\bibliographystyle{splncs04}
\bibliography{references}

\appendix

\section{Proof of Theorem~\ref{thm:tm-to-cm}: From TMs to Counter Machines}
\label{app:proof-thm-tm-to-cm}
In this section, we sketch the proof that every log-space-bounded probabilistic Turing machine with expected polynomial run-time can be simulated by an  input-bounded, randomised, multiplicative counter machines with negligible runtime overhead.
The proof is a reduction in several steps:
\begin{enumerate}
  \item We start with a randomised Turing machine $M$ that takes input in \emph{unary} encoding and requires \emph{logarithmic} space.
  \item We then show how to transform $M$ into an equivalent RTM $M'$ that takes input in \emph{binary} encoding and requires \emph{linear} space on a single tape.
  \item Then we transform $M'$ into an equivalent multi-stack machine $S$, and $S$ into an equivalent stack machine $S'$ where the length of each stack can be bounded \emph{precisely} by the length of the input.
  \item Finally, we show how to transform $S'$ into an equivalent counter machine $K$.
\end{enumerate}
The rest of this section is structured as follows.
In Subsection~\ref{ssection:turing-zl}, we formally introduce randomised Turing machines. In Subsection~\ref{ssection:stack}, we introduce randomised stack machines and sketch how to simulate randomised Turing machines efficiently in stack machines whose stacks are bounded by the input size. Finally, in Subsection~\ref{ssection:stack-to-counter}, we show how to simulate stack machines in multiplicative counter machines.

\subsection{Turing machines}
\label{ssection:turing-zl}

For our reduction, we need to define a Turing machine model.

\begin{definition}[Randomised Turing Machine]
A randomised Turing machine (RTM) is a tuple $(Q, \Sigma, \Gamma, q_0, q_a, q_r \delta_1, \delta_2)$ where
\begin{itemize}
  \item $Q$ is a finite set of states,
  \item $\Gamma$ is the tape alphabet, with $\square,0,1\in\Gamma$,
  \item $q_0,q_f,q_r \in Q$ are the initial, accepting, and rejecting states, respectively.
  \item $\delta_i \colon Q \times \Gamma \mapsto Q \times \Gamma \times \{-1, 0, +1\}$ are the transition functions.
\end{itemize}
Additionally, we require that $\delta_i(q,\alpha)=(q,\alpha,0)$ for all $q\in\{q_f,q_r\}$, $i\in\{1,2\}$ and $\alpha\in\Gamma$.
\end{definition}
So once the RTM has reached $q_f$ or $q_r$, it performs no further actions.

\parag{Configurations and step relation}
A \emph{configuration} of an RTM $M$ is a tuple $C = (q, i, \tau)$ consisting of a control state $q\in Q$, read/write head position $i \in \Z$, and tape $\tau \colon \Z \rightarrow \Gamma$. The configuration $C$ is acceping if $q = q_f$, and rejecting if $q = q_r$.
For $\circ \in \{1, 2\}$ and two configurations $C = (q, i, \tau)$,  $C'= (q', i', \tau')$, a \emph{step} $C \trans{\circ} C'$ is valid if and only if the following holds:
\begin{itemize}
\item $\delta_{\circ}(q, \tau(i)) = (q', \alpha, d)$ for some $\alpha$ and some $d$,
\item $i' = i + d$,
\item $\tau'(i) = \alpha$ and $\tau'(j) = \tau(j)$ for all $j \neq i$.
\end{itemize}

\parag{Input encoding}
For a given input $x\in\N^k$ and configuration $C\defeq(q_0, 0, \tau)$, we say that $C$ is \emph{$x$ encoded as unary (binary)} if $\tau(1)...\tau(n)=\square1^{x_1}\square...\square1^{x_k}\square$ (resp.\ $\tau(1)...\tau(n)=\square(x_1)_2\square...\square(x_1)_2\square$), where $(r)_2$ denotes the binary representation of $r\in\N$, and $\tau(i)=0$ for $i\notin\{1,...,n\}$.

\parag{Random executions}
We define the \emph{random exection for input $x$ encoded as unary (binary)} as a Markov chain $\pi=C_0C_1C_2...$ where $C_0,...$ are RTM-configurations, $C_0$ is $x$ encoded as unary (binary), and $C_i$ is determined as the unique configuration with $C_{i-1}\trans{\circ}C_i$, where $\circ\in\{1,2\}$ is chosen uniformly at random.

\parag{Acceptance/rejection}
For a predicate $\varphi:\N^k\rightarrow\{0,1\}$, we say that $M$ \emph{computes $\varphi$ within $\O(f(n))$ steps}, if for all $x \in \N^k$ the random execution $C_0...$ for input $x$ reaches an accepting or rejecting configuration $C$ almost surely and after at most $f(n)$ steps in expectation, and $C$ is accepting iff $\varphi(x)=1$. Such an RTM is called \emph{$f(n)$-time bounded}. We say $M$ is \emph{$f(n)$-space bounded} if all configurations $(q, i, \tau)$ reachable from $C_0$ satisfy $|i| \leq f(|x|)$.

\parag{Log-space Turing machines}
The definition of the Turing machine model given above does not have a separate reading/writing tape, and is thus not suitable for the definition of log-space complexity classes. For log-space Turing machines we assume a modified model, where instead of one tape $\tau$, we have two tapes: An immutable reading tape $\tau_\text{read}$, and a working tape $\tau_\text{work}$, and instead of one head we have two heads, one for each tape, which can be moved independently. A configuration then is a tuple $(q, i, j, \tau_\text{read}, \tau_\text{work})$, where $q$ is the current state, $i$ is the position of the reading head, and $j$ is the position of the working head. The other definitions are adapted in the obvious manner.

\begin{lemma}\label{lemma:unary-to-binary-tm}
Every predicate decidable by an $\O(\log n)$-space bounded, $\O(f(n))$-time bounded RTM with input encoded as unary is decidable by an $\O(N)$-space bounded and  $\O(N\cdot f(2^N))$-time bounded RTM with input encoded as binary.
\end{lemma}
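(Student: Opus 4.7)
My plan is to build a single-tape RTM $M'$ that simulates $M$ step-by-step while keeping its input in binary. A preliminary observation: for fixed arity $k$, the unary length $n$ and binary length $N$ are related by $n \le 2^{\O(N)}$, so $f(n) \le f(2^N)$ after replacing $f$ by its nondecreasing envelope if necessary.

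The tape of $M'$ is organised via a constant number of tracks: one track stores the (never modified) binary input in cells $1$ through $N$; the remaining tracks, spread over the first $\O(N)$ cells, simultaneously hold a verbatim copy of $M$'s work tape (of size $\O(\log n) = \O(N)$), two binary registers storing the simulated head positions $i$ (for $M$'s input tape) and $j$ (for $M$'s work tape), each of $\O(N)$ bits since $i \le n \le 2^{\O(N)}$, and $\O(1)$ further $\O(N)$-bit scratch registers. Total space is $\O(N)$ as required.

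To simulate one step of $M$, $M'$ must fetch the symbol under each of $M$'s two heads. The work-tape symbol is obtained by walking to offset $j$ on the work-tape track in $\O(N)$ time. For the input-tape symbol, I use the fact that in the unary encoding $\square 1^{x_1} \square \dots \square 1^{x_k} \square$, position $i$ is a $\square$ exactly when $i$ equals one of the cumulative offsets $s_r \defeq r + \sum_{\ell \le r} x_\ell$, and a $1$ otherwise. So $M'$ sweeps once over the binary input, reading each $x_r$ into a scratch register, accumulating the running sum in another register, and comparing with $i$ after each update. Since binary addition and comparison on $\O(N)$-bit numbers take $\O(N)$ time and there are only $k = \O(1)$ summands, the whole sweep costs $\O(N)$. $M'$ then flips a fair coin to select $\circ \in \{1,2\}$, applies $\delta_\circ$ to update its work-tape track, and adjusts $i$ and $j$; each such update is again $\O(N)$. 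Acceptance and rejection states of $M$ become those of $M'$.

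Putting the pieces together, each simulated step costs $\O(N)$ time, so the total expected runtime is $\O(N \cdot f(n)) \le \O(N \cdot f(2^N))$, and the space stays $\O(N)$ throughout. The only real obstacle is the bookkeeping: arranging tracks, pointer arithmetic, and in-place moves carefully enough that reading the (conceptually enormous) unary input tape at an arbitrary position costs only $\O(N)$ per step. This is a standard exercise once the layout is fixed, and uses nothing beyond elementary binary arithmetic.
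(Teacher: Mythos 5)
Your proposal is correct and follows essentially the same route as the paper's proof sketch: simulate $M$ step by step, keeping the work tape verbatim and the input-head position as an $\O(N)$-bit binary counter, and recover the unary-tape symbol by $\O(1)$ additions/comparisons against the binary input, for an $\O(N)$ overhead per simulated step. You merely spell out the cumulative-offset comparison that the paper leaves implicit.
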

\begin{proof}[\Proofsketch]
Let $M$ be an $\O(\log n)$-space bounded, $\O(f(n))$-time bounded RTM with input encoded as unary. We sketch the construction of an RTM $M'$ with binary input encoding that simulates $M$ in $\O(N)$ space and within $\O(f(2^N))$ steps in expectation, where $N=\log_2n$ denotes the size of the input of $M'$ for clarity.

Movements and updates on the working tape of $M$ can clearly be simulated by $M'$ within the given space. Further, $M'$ can simulate the movement of the head on the input tape of $M$ by keeping a single binary counter to encode its position and performing a constant number of additions/comparisons when $M$ moves the input head. This means that $M'$ simulates a step of $M$ with an overhead of $N$ steps.

As $M$ takes $f(n)=f(2^N)$ steps, we end up with $M'$ deciding the predicate in $Nf(2^N)$ steps.
\end{proof}

\subsection{Stack Machines}
We now sketch how a RTMs $M$ can be simulated efficiently with randomised stack machines.

\label{ssection:stack}
\begin{definition}
A (randomised) $l$-stack-machine $M$ is a tuple $(Q, \delta_1, \delta_2, q_0, q_f, q_r)$ where
\begin{itemize}
  \item $Q$ is a finite set of states,
  \item $\delta_i \colon Q \rightarrow \left([l] \times \{0,1\} \times Q \right) \cup \left([l] \times Q \times Q \times Q) \right)$
  are transition functions
  \item $q_0,q_f,q_r \in Q$ are the initial, accepting, and rejecting states, respectively.
\end{itemize}
Additionally, we require that $\delta_i(q)=(1,q,q,q)$ for $i\in\{1,2\}$ and $q\in\{q_f,q_r\}$, meaning that from states $q_f,q_r$ no (significant) actions are performed.
\end{definition}
\newcommand{\oppush}[2]{\trans{\textsc{push } {#1}\, {#2}}}
\newcommand{\oppop}[1]{\trans{\textsc{pop } {#1}}}

\parag{Configurations and transitions}
A configuration of the $l$-stack machine $M$ is a tuple $C = (q, s_1, \ldots, s_l)$, consisting of a control state $q \in Q$, and $l$ binary stacks $s_1, \ldots, s_k \in \{0,1\}^*$. We say $C$ is \emph{accepting} if $q = q_f$, and \emph{rejecting} if $q = q_r$. Let $C' = (q', s_1', \ldots, s_l')$ be a configuration of $M$. We use $C \oppush{k}{\alpha}_\circ C'$ and $C \oppop{k}_\circ C'$ to denote that $C'$ results from pushing $\alpha$ to or popping from the $k$th stack, respectively, as specified in $\delta_{\circ}$:
Formally, we write $C \oppush{k}{\alpha}_\circ C'$ whenever $\delta_\circ(q) = (k, \alpha, q')$ and $s_k' = \alpha \cdot s_k$. We write $C \oppop{k}_\circ C'$ whenever $\delta(q) = (k, q_1, q_2, q_3)$ and $s_i = s_i'$ for every $i \neq k$, and one of following three constraints is satisfied:
\begin{align}
  q' & = q_1 \text{ and } s_k = 0 \cdot s_k', \\
  q' & = q_2 \text{ and } s_k = 1 \cdot s_k', \\
  q' & = q_3 \text{ and } s_k = s_k' = \epsilon.
\end{align}
 We write $C \trans{\circ} C'$ to indicate that $C \oppush{k}{\alpha}_\circ C'$ or $C \oppop{k}_\circ C'$ holds.

\parag{Input encoding}
These encodings are technical artefacts of the reductions we use, hence they are quite unintuitive. For a given input $x\in\N^k$ and configuration $C\defeq(q_0, s_1,...,s_l)$, we say that $C$ is \emph{$x$ two-symbol encoded} if $s_1=f(\tau(0))...f(\tau(n))$ and $s_2=...=s_l=\epsilon$, where $\tau$ are the tape contents of $x$ encoded as binary and $f(0)\defeq01$, $f(1)\defeq10$, $f(\square)\defeq11$. We say that $C$ is \emph{$x$ multi-stack encoded} if $s_{ji}=r_{3(i-1)+j}$ for $j\in\{1,2,3\}$ and $i\in\N$ with $1\le3(i-1)+j\le n$, where $(q_0,r,\epsilon,...,\epsilon)$ is $x$ two-symbol encoded.

The idea of the multi-stack encoding is that each stack has length at most $n$, which we can ensure by distributing the symbols in a round-robin fashion.

\parag{Acceptance condition and boundedness}
Random executions and acceptance are defined as in RTMs, now for configurations of the stack and the step relation of the stack machine. An $l$-stack machine $M$ is \emph{$f(n)$-bounded} if any initial configuration given by an encoded input $x$ can only reach configurations $(q,s_1,...,s_l)$ with $|s_1|,...,|s_l|\le f(n)$, where $n\defeq \log_2(x(1)+...+x(k))$. (So $n$ is roughly the length of $x$ in binary encoding.)

\smallskip

For subsequent propositions, we fix a predicate $\varphi:\N^k\rightarrow\{0,1\}$ and a function $f \colon \N \rightarrow \N$. We will now show that every $\O(n)$-bounded stack machine can be simulated by an $n$-bounded stack machine.

\begin{proposition}\label{prop:linear-to-n}
Let $c \in \N$.  For every $cn$-bounded $l$-stack machine that decides $\varphi$ in $\O(f(n))$ steps with input two-symbol encoded, there  exists an $n$-bounded $(c(l-1) + 3)$-stack machine that decides $\varphi$ in $\O(f(n))$ steps with input multi-stack encoded.
\end{proposition}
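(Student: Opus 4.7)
The plan is to realize each of the $l-1$ non-input virtual stacks (each of size at most $cn$) by $c$ actual stacks via round-robin distribution, while the input virtual stack (whose two-symbol encoding has length at most $2n+2$) is simulated by the $3$ stacks that already constitute the multi-stack encoding prescribed for the initial configuration. This accounts for exactly $c(l-1)+3$ actual stacks, matching the statement.

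Concretely, the $k$-th symbol from the bottom of a non-input virtual stack would be stored on actual stack $((k-1)\bmod c)+1$ at depth $\lceil k/c\rceil$ from the bottom; the input virtual stack uses the analogous scheme with modulus $3$, which is exactly the mapping built into the definition of multi-stack encoding. Since the round-robin keeps the $c$ (resp.\ $3$) actual-stack lengths within one of each other after any sequence of pushes and pops, each is bounded by $\lceil cn/c\rceil=n$ (resp.\ $\lceil(2n+2)/3\rceil\le n$ for $n$ large enough), meeting the $n$-bounded requirement. The initial configuration of the simulator is already in this layout, so no preprocessing step is needed.

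To simulate a single virtual step, the new machine would keep in its finite control, for every virtual stack, the current length modulo the corresponding modulus ($c$ or $3$). A virtual push becomes an actual push onto the unique stack indicated by that counter, followed by a counter increment modulo its base; a non-empty virtual pop is the symmetric operation. The empty-pop branch of the original transition is selected precisely when the counter is $0$ and the actual stack consulted for the pop is empty, which is detectable using that actual stack's own empty-pop branch. The new state space is $Q$ augmented by these mod counters, and $\delta_1',\delta_2'$ are obtained by routine case analysis on the counters and the original transition labels. Each virtual step thus translates to exactly one actual push or pop plus constant bookkeeping in the finite control, so the expected runtime remains $\mathcal{O}(f(n))$ and the random choice $\circ\in\{1,2\}$ is passed through verbatim.

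The main subtle point I expect is aligning the empty-detection with the three-way branching of the original $\delta_\circ$: one must verify that, under the round-robin invariant, "virtual stack is empty" is equivalent to "counter is zero and the designated actual stack's pop returns empty", and that all three outcomes of a virtual pop (reading $0$, reading $1$, or empty) preserve the invariant. Once this invariant is established, correctness on random executions follows immediately, since the simulation is step-for-step deterministic after the $\{1,2\}$-choice is fixed, and the expected step count transfers verbatim from the original machine.
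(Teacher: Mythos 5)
Your proposal matches the paper's proof: both replace each $cn$-bounded stack by $c$ auxiliary $n$-bounded stacks accessed in round-robin order via a counter kept in the finite control, use $3$ stacks for the input stack so that the multi-stack encoding requires no preprocessing, and observe that each virtual push/pop translates to a single actual operation, preserving the $\O(f(n))$ bound. Your explicit treatment of empty-detection via the counter plus the designated stack's empty-pop branch is a welcome bit of extra care, but the argument is the same.
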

\begin{proof}[\Proofsketch]
The idea is to replace each stack $s$ of the $c\cdot n$-bounded machine by $c$ many $n$-bounded auxiliary stacks $s_1, ..., s_c$. For each stack $s$ of the original machine, the new machine stores in its control state a round counter $i \in \{1, ..., c\}$ that specifies the currently used auxiliary stack $s_i$, starting with $i=1$. Pushing some symbol $\alpha$ onto $s$ and transitioning from states $q$ to $q'$ is implemented by updating $q$ to $q'$ and updating $i$ to its successor $i'$, where $i' = i + 1$, if $i < c$, else $i' = 1$, and by pushing $\alpha$ onto $s_{i'}$. This way, the first element of $s$ is pushed onto $s_1$, the second onto $s_2$, and so forth, until the element $c+1$, which again goes onto $s_1$, and so forth. Symmetrically, popping from $s$  amounts to popping from $s_i$  and decrementing $i$ (resp.\ setting $i = c$, if $i = 1$).

For the input stack we use three auxiliary stacks instead of $c$, as that suffices for those stacks to be $n$-bounded. Conveniently, the multi-stack encoding means that we do not have to perform any work to convert the input.
\end{proof}

\begin{proposition} \label{prop:tm-to-stack}
Let $M$ be an $\O(n)$-space bounded, $\mathcal{O}(f(n))$-time bounded RTM deciding $\varphi$ with input encoded as binary. Then there exists an $n$-bounded stack machine $S$ that decides $\varphi$ in $\mathcal{O}(f(n))$ steps with input multi-stack encoded.
\end{proposition}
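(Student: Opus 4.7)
The plan is to simulate $M$ by a two-stack machine that stores the tape content on either side of the head, encoding each tape symbol as two bits using the two-symbol encoding $f$, and then apply Proposition~\ref{prop:linear-to-n} to tighten the stack length bound from $\O(n)$ to $n$.

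First, I would assume without loss of generality that $M$ has tape alphabet $\{0,1,\square\}$, since any RTM with a larger alphabet $\Gamma$ can be converted to this form at a constant-factor cost in time and space by replacing each symbol with a fixed-width $\lceil \log_2 |\Gamma|\rceil$-bit block. Next I would build a $2$-stack machine $S'$ with stacks $s_L, s_R$ satisfying the invariant that, when $M$'s head is at position $i$ with tape content $\tau$, the contents of $s_R$ (top to bottom) are $f(\tau(i)) f(\tau(i+1)) \cdots$ and the contents of $s_L$ are $f(\tau(i-1)) f(\tau(i-2)) \cdots$. For each of the two transition functions $\delta_\circ$, a step $\delta_\circ(q,\alpha) = (q',\alpha',d)$ is simulated by popping the top two bits of $s_R$ to decode $\alpha$ (branching on the two bits via $f^{-1}$), pushing $f(\alpha')$ back onto $s_R$, and then, if $d = +1$, transferring those two bits from $s_R$ to $s_L$, or, if $d = -1$, transferring the top two bits of $s_L$ onto $s_R$. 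When a pop encounters an empty stack (meaning the head has moved past the initially touched portion of the tape), we synthesise the block $f(0)$, since $\tau(i) = 0$ for $i \notin \{1,\dots,n\}$. Each simulated step uses $O(1)$ stack operations.

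For the initial configuration, the two-symbol encoded input on $s_1$ is exactly $f(\tau(0)) f(\tau(1)) \cdots f(\tau(n))$, matching our invariant for head position $0$, so we initialise $s_R$ as $s_1$ and $s_L$ as $\epsilon$. Since $M$ is $\O(n)$-space bounded, at most $\O(n)$ tape cells are ever written, so $|s_L|, |s_R| \leq c n$ for some constant $c$; hence $S'$ is $cn$-bounded and decides $\varphi$ in $\O(f(n))$ expected steps. Finally, invoking Proposition~\ref{prop:linear-to-n} with $l = 2$ yields an $n$-bounded $(c+3)$-stack machine with input multi-stack encoded that decides $\varphi$ in $\O(f(n))$ steps, establishing the claim.

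The main obstacle is less conceptual than bookkeeping: one must verify that the two-bit encoding invariant is faithfully preserved across all cases of $d \in \{-1,0,+1\}$, that the synthesised $f(0)$ blocks correctly stand in for the unwritten tape beyond both ends of the input, and that the stack machine's restriction to single-bit pushes and pops does not interfere with the two-bit-per-symbol representation (which is handled by always performing bit operations in pairs). Once this is in place, the step-count analysis and the appeal to Proposition~\ref{prop:linear-to-n} are routine.
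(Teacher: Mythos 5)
Your proposal is correct and follows essentially the same route as the paper: represent the tape as two stacks holding the two-symbol-encoded content to the left and right of the head, simulate each RTM step by a constant number of pop/push operations, observe that $\O(n)$-space boundedness gives $cn$-bounded stacks, and then invoke Proposition~\ref{prop:linear-to-n} to obtain the $n$-bounded machine with multi-stack encoded input. Your treatment is somewhat more explicit than the paper's sketch about decoding the two-bit blocks, the head-movement cases, and synthesising $f(0)$ past the written portion of the tape, but these are exactly the bookkeeping details the paper elides.
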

\begin{proof}[\Proofsketch]
We assume wlog that $M$ uses $\Gamma = \{\square,0,1\}$ as tape alphabet. Let $C = (q, i, \tau)$ be a configuration of $M$.
Then $C$ can be represented as triple $(q, l, r)$ where $l,r \in \{0, 1\}^\omega$,
$l \defeq f(\tau(i-1)) f(\tau(i-2))  ...$ represents the  side of the tape left to the head, and $r \defeq f(\tau(i)) f(\tau(i+1)) ...$ represents the side of the tape to the right of the head, where $f$ is defined as for the two-symbol encoding.
Moving the head to the left and updating the tape then amounts to removing the first two left-most symbols from $l$, and prepending the resulting updated digits to $r$, and symmetrically for moving the head to the right.

Since $M$ is $\O(n)$-space bounded, we may represent $l$ and $r$ by linearly bounded stacks in a stack machine $S$. Again, the two-symbol input encoding means that we do not have to perform any work for initialisation, as long as we use $r$ as input stack.

The aforementioned updates can then be performed using corresponding pop/push operations. This gives an $\O(n)$-bounded stack machine $S$ that decides $\varphi$ in $\O(f(n))$ steps. By Proposition~\ref{prop:linear-to-n}, we can transform $S$ into a $n$-bounded stack machine that decides $\varphi$ in $\O(f(n) + n)$ steps.
\end{proof}

\subsection{From Stack Machines to Counter Machines}
\label{ssection:stack-to-counter}
\begin{proposition}
Let $M$ be an $\O(n)$-space bounded RTM that decides $\varphi$ in $\O(f(n))$ steps, where the input is given in unary. Then there exists an $n$-bounded counter machine that computes $\varphi$ in $\O(f(n)\log(n))$ steps.
\end{proposition}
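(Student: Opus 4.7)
The plan is to assemble the previous reductions and then add one final step that translates the resulting stack machine into a counter machine. Applying Lemma~\ref{lemma:unary-to-binary-tm} to $M$ produces an RTM $M'$ on binary input of length $N = \lceil\log_2 n\rceil$ that is $\O(N)$-space bounded and runs in expected $\O(N\cdot f(2^N)) = \O(f(n)\log n)$ steps. Feeding $M'$ into Proposition~\ref{prop:tm-to-stack} then yields an $N$-bounded stack machine $S$ deciding $\varphi$ in the same time bound with input multi-stack encoded. All that remains is to simulate a step of $S$ by a constant number of operations on an $n$-bounded counter machine~$K$.

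The key observation is that every binary stack of $S$ has length at most $N = \lceil\log_2 n\rceil$, so its entire content fits into a single counter. The plan is to use the sentinel encoding: a stack $[b_1,\dots,b_k]$ (top-first) is represented by the counter value $c = 2^k + \sum_{i=1}^k b_i 2^{i-1}$, placing the top of the stack at the least significant bit and a fixed $1$-bit in position $k$ as a marker for the bottom, so that the empty stack corresponds to $c=1$. Pushing a bit $b$ then becomes $c\mapsto 2c+b$, realised by an $\Op{mul}_2$ command followed by $\Op{inc}$ iff $b=1$. For popping, note that $\Op{divmod}_2(c)$ returns $(b_1, c')$ with $c' = 2^{k-1}+\sum_{i\ge 2}b_i 2^{i-2}$, i.e.\ exactly the encoding of $[b_2,\dots,b_k]$ when $k\ge 1$, and returns $(1,0)$ when the stack was empty. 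A pop is therefore simulated by $\Op{divmod}_2$ followed by an $\Op{iszero}$ on the new counter value: if the test returns zero the stack was empty, so a single $\Op{inc}$ restores the sentinel and we branch to the ``stack empty'' successor $q_3$; otherwise we branch to $q_1$ or $q_2$ depending on the bit returned by $\Op{divmod}_2$.

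To satisfy the $n$-bound on counters, I first apply Proposition~\ref{prop:linear-to-n} once more to reduce the stack-length bound from $N$ to $N-1$, so that the sentinel encoding is always at most $2^N\le n$; the other counters of $K$ serve only as scratch space for the finite pipeline implementing each pop and remain within the same bound. To initialise the stack-counters from the raw CM input $(x_1,\dots,x_l)$, I extract the bits of each $x_i$ via a loop of $\Op{divmod}_2$ operations and push them onto the three input stacks in the round-robin pattern prescribed by the multi-stack encoding, which takes $\O(\log n)$ CM steps and is dominated by the subsequent simulation. The main obstacle is the rigid branching format of CM transitions, in which each transition can branch only on the outcome of a single command: the three cases of a pop (popped $0$, popped $1$, stack empty) together with the correct target state of $S$ must be threaded through a short sequence of auxiliary states chaining $\Op{divmod}_2$, $\Op{iszero}$, and a conditional $\Op{inc}$, with care to restore the sentinel on the empty branch before proceeding. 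Each stack-machine step then costs $\O(1)$ CM operations, so the overall running time is $\O(f(n)\log n)$ as required.
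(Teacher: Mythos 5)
Your proposal is correct and follows essentially the same route as the paper: the same chain Lemma~\ref{lemma:unary-to-binary-tm} $\to$ Proposition~\ref{prop:tm-to-stack} $\to$ a counter-machine simulation in which each $\O(\log n)$-bounded binary stack lives in a counter, with push realised by $\Op{mul}_2$ plus a conditional $\Op{inc}$ and pop by $\Op{divmod}_2$ plus $\Op{iszero}$, at constant cost per stack operation. The only deviation is cosmetic --- you detect emptiness via a sentinel bit in a single counter where the paper keeps a second ``length'' counter $x_i'$ per stack --- and both work; just note that $2^{\lceil\log_2 n\rceil}\le n$ fails for non-powers of two, so your appeal to Proposition~\ref{prop:linear-to-n} should shave off one more symbol (or adjust the constant) to genuinely keep the sentinel encoding below $n$.
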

\begin{proof}
By Lemma~\ref{lemma:unary-to-binary-tm} and Proposition~\ref{prop:tm-to-stack}, there exists an $N$-bounded $l$-stack-machine $S$ that decides $\varphi$ in $\O(f(2^N)\cdot N)$ steps with input in multi-stack encoding, where $N=\log_2n$ is roughly the length of the input encoded as binary.

We now describe how the stack content is represented in the counters, how the counters can be initialized, and how push/pop operations can be implemented efficiently in that representation.

\parag{Representing stack content}
Each stack $s_i$ of $S$ is represented by two counters $x_i,x_i'$ in the counter machine, where $x_i$ contains the actual symbols of $s_i$, and $x_i'$ simply contains a 1 for each symbol in $s_i$ and serves to determine whether $x_i$ is empty. Formally, for $s_i=w_0...w_m$ we get $x_i=\sum_j w_j2^j$ and $x_i'=\sum_j 2^j$. As $S$ is $N$-bounded, each stack has at most $N=\log_2n$ symbols, and thus each of our counters is at most $n$ large. Therefore we are indeed $n$-bounded.

Hence we can simply push a symbol $b\in\{0,1\}$ onto $s_i$ by performing operations $\Op{mul}_2$ on $x_i$ and $x_i'$, $\Op{inc}$ on $x_i'$, and if $b=1$ also $\Op{inc}$ on $x_i$. Similarly, to pop from a stack we determine whether it is empty via $\Op{iszero}$ on $x_i'$, and do a $\Op{divmod_2}$ on both $x_i$ and $x_i'$ if it is not.

This means that we simulate a single operation with only constant overhead, so we execute $\O(f(n)\log(n)$ steps of the stack machine.

\parag{Initialization of stacks}
Recall that $S$ accepts input in the multi-stack encoding, while the counter machine simply has $k$ counter initialised to the input values.

We can determine bits of the binary representation of the value stored in a counter by repeatedly performing $\Op{divmod}_2$, which are straightforward to encode into their two-symbol encoding, which we then push onto the correct stacks in round-robin fashion to construct the multi-stack encoding. Here, we use $\Op{iszero}$ to determine whether any bits are left in a counter. This initialisation procedure takes $\O(\log n)$ steps, which is subsumed in the overall running time.
\end{proof}

\section{Proof: Stability can be decided in logarithmic time} \label{app:proof-stable}
Call a configuration $C$ \emph{stable} if it is in a $b$-consensus for some $b \in \{0, 1\}$, and every configuration reachable from $C$ is also in a $b$-consensus.
\begin{proposition}
  For every BCP $\PP=(Q, \Sigma, \delta, I, O)$, there exists a (deterministic) Turing machine $T_\PP$ that decides in constant time whether a given configuration $C \in \N^Q$ is stable.
\end{proposition}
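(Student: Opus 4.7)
The plan is to exploit monotonicity of broadcast transitions to show that the set of unstable configurations is upward-closed in the multiset order on $\N^Q$. By Dickson's lemma, any such upward-closed set has a finite basis of minimal elements $m_1,\ldots,m_k$, each of bounded size depending only on $\PP$. These witnesses can be hardcoded into $T_\PP$'s finite control, reducing stability testing to a constant number of componentwise comparisons.

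First I would verify monotonicity: if $C_1\leq C_2$ and $C_1\trans[0pt]{}C_1'$ via $\delta(q)=(r,f)$, then $C_2(q)\geq C_1(q)\geq 1$, so the same transition applies at $C_2$, yielding $C_2'$ with $C_2'-C_1'=f(C_2-C_1)\geq 0$; hence $C_1'\leq C_2'$. Iterating gives the standard simulation property $C_1\trans{*}C_1'$ implies $C_2\trans{*}C_2'$ with $C_1'\leq C_2'$. From this I would argue upward-closedness of instability: if $C_1\leq C_2$ and $C_1$ is unstable, then either $C_1$ is not in any consensus (so $\supp{C_2}\supseteq\supp{C_1}$ forces $C_2$ out of consensus as well), or $C_1$ sits in some $b$-consensus and reaches a configuration $C_1'$ whose support contains a state of the ``wrong'' polarity; lifting yields $C_2\trans{*}C_2'\geq C_1'$, whose support inherits the wrong-polarity state, so $C_2'$ is not in the $b$-consensus and $C_2$ is unstable (whether or not $C_2$ is itself in consensus).

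Given upward-closedness, Dickson's lemma produces a finite basis $m_1,\ldots,m_k$ of minimal unstable configurations. Since $k$ and each $|m_i|$ depend only on $\PP$, we hardcode them into $T_\PP$, which on input $C$ checks whether $C\geq m_i$ for some $i$. This amounts to a fixed number of componentwise tests $C(q)\geq m_i(q)$, which is constant time in a RAM model, and $\O(\log n)$ on a Turing tape once we account for the bit complexity of comparing binary-encoded counts (consistent with the bound claimed in Lemma~\ref{lem:polybcpinzl}).

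The main obstacle is establishing upward-closedness correctly: it is routine once the right framing is found but easy to mis-state, since naive ``support-only'' abstractions fail. For instance, one can exhibit a BCP with $\delta(a)=(b,f)$, $f(a)=c$, $f(b)=b$, $O=\{a,b\}$, in which $C_1=\multiset{a,b}$ and $C_2=\multiset{a,a,b}$ share the same support yet $C_1$ is stable (it reaches $\multiset{b,b}$) while $C_2$ is not (it reaches $\multiset{b,b,c}$). Monotonicity is what resolves this, yielding the minimal-witness characterisation. A secondary subtlety is ensuring the basis $\{m_i\}$ is effectively computable from $\PP$ so it can be hardcoded, which reduces to decidability of coverability for broadcast protocols.
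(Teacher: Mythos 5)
Your proof is correct and takes essentially the same route as the paper's: instability is upward-closed (you verify this via monotonicity of the broadcast step relation, a detail the paper dismisses as ``straightforward''), Dickson's lemma gives a finite basis of minimal unstable configurations, and these are hardcoded into $T_\PP$ so that stability reduces to a constant number of componentwise comparisons. The only additions are your explicit lifting argument and your remark on effective computability of the basis, which the paper sidesteps since mere existence of $T_\PP$ does not require constructing the basis uniformly from $\PP$.
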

\begin{proof}
  Let $U \subseteq \N^Q$ be the set of unstable configurations of $\PP$. It is straightforward to see that $U$ is upwards-closed, that is, $C \in U$ implies $C' \in U$ for every $C' \geq C$ (if a configuration is not stable, then adding additional agents cannot make it stable).

  By Dickson's lemma~\cite{kruskal1972theory}, $U$ has finitely many minimal elements. Since $\PP$ is not part of the input, we may assume that the minimal elements are precomputed and $T_\PP$ can iterate over the minimal elements in constant time. In order to decide whether $C \in U$ holds, $T_\PP$ only needs to verify whether $C' \leq C$ holds for some minimal unstable configuration $C'$. By the previous considerations, this can be done with a constant number of comparisons, so in time $\mathcal{O}(\log n)$.
\end{proof}

\section{Proof of Proposition~\ref{prop:bppforboolean}: Boolean Combinations of Predicates}\label{app:bppforboolean}
\restatebppforboolean*

\begin{proof}
It suffices to show the statement for $k=2$ and $\varphi=\neg\varphi_1\wedge\neg\varphi_2$.

Let $\PP_i=:(Q_i,\Sigma, \delta_i,I_i,O_i)$, $i=1,2$. We assume that both protocols work on the same input alphabet $\Sigma$. This is without loss of generality, as we can always unify alphabets by adding missing symbols and identity mappings. We define our resulting protocol $\PP=(Q,\Sigma, \delta, I,O)$ as follows:
\begin{align*}
  Q & \defeq Q_1 \times Q_2, \\
  \delta((q_1, q_2)) & \defeq ((r_1, r_2), f) \text{ with } \delta_i(q_i) = (r_i, f_i) \text{ and } f((q_1, q_2)) = (f_1(q_1), f_2(q_2)) \\
  I(x) & \defeq (I(x), I(x)) \text{ for every } x \in \Sigma, \\
  O & \defeq \{(q_1,q_2):q_1\notin O_1\wedge q_2\notin O_2\}
\end{align*}

Every execution of $\PP$ maps onto one of $\PP_1$ and one of $\PP_2$ by projecting onto either the first of the second element of each state, so every fair execution of $\PP$ consists of two executions of $\PP_1$ and $\PP_2$ which stabilize to the value of $\varphi_1$ and $\varphi_2$ within $\mathcal{O}(n\log n)$ steps in expectation, respectively. Hence $\PP$  stabilizes to $\neg\varphi_1(C_0) \land \neg\varphi_2(C_0)=\varphi(C_0)$ within $\mathcal{O}(n\log n)$ steps in expectation.
\end{proof}

\section{Sums of Geometric Random Variables}\label{app:harmonictail}
We use the following theorems to estimate lower and upper tail probabilities for sums of geometrically distributed random variables. As before, we use $\mathcal{G}(p)$, for $0<p<1$, to denote the geometric distribution (with expectation $1/p$).

\begin{theorem}[{see~\cite[Theorem~2.1]{janson2018tail}}] \label{thm:jansen21}
Let $X_1,...,X_n$ denote independent random variables with $X_i\sim\mathcal{G}(p_i)$ for $i=1,...,n$ and $0<p_i\le1$, and set $X\defeq X_1+...+X_n$, $\mu\defeq\mathbb{E}(X)$, and $p_*\defeq\min_ip_i$. Then, for any $\lambda\ge1$
\[\mathbb{P}(X\ge\lambda\mu)\le e^{-p_*\mu(\lambda-1-\ln\lambda)}\]
\end{theorem}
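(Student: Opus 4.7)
The plan is to apply the standard Cram\'er--Chernoff method. Markov's inequality applied to $e^{sX}$, combined with independence, gives
\[\mathbb{P}(X \geq \lambda\mu) \leq e^{-s\lambda\mu}\prod_{i=1}^n \mathbb{E}(e^{sX_i})\]
for any $s \in (0, -\ln(1-p_*))$, and direct computation from the geometric mass function yields $\mathbb{E}(e^{sX_i}) = \frac{p_i e^s}{1-(1-p_i)e^s}$.

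The heart of the argument is a comparison lemma: for every admissible $s$ and every $p \in [p_*, 1]$,
\[\frac{1}{p}\ln\frac{p e^s}{1-(1-p)e^s} \;\leq\; \frac{1}{p_*}\ln\frac{p_* e^s}{1-(1-p_*)e^s},\]
i.e.\ the per-unit-mean log-MGF is worst at $p = p_*$. I would prove this by a direct derivative computation in $p$, verifying that the left-hand side is monotone on $[p_*, 1]$. Summing the resulting term-by-term bound over $i$ (and using $\sum_i 1/p_i = \mu$) collapses the product into the scalar-variable tail bound
\[\mathbb{P}(X \geq \lambda\mu) \leq \exp\!\left(p_*\mu\left(-\tfrac{s\lambda}{p_*} + \ln\tfrac{p_* e^s}{1-(1-p_*)e^s}\right)\right).\]

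Finally, I would optimize in $s$. Setting the derivative of the inner exponent to zero yields $e^{s^*} = (\lambda-p_*)/(\lambda(1-p_*))$, and substituting back reduces the exponent to a quantity bounded above by $-p_*\mu(\lambda-1-\ln\lambda)$ via the elementary inequality $(\lambda-p_*)\ln\frac{\lambda-p_*}{\lambda(1-p_*)} \geq p_*(\lambda-1)$, which itself follows from $\ln(1+x) \geq x/(1+x)$ applied to $x = p_*(\lambda-1)/(\lambda(1-p_*))$. The main obstacle is the comparison lemma: the required monotonicity in $p$ is a slightly delicate but routine calculus check. Once that is in place, the remainder is the classical Cram\'er rate-function computation.
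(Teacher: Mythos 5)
The paper does not actually prove this statement; it is imported verbatim from Janson (Theorem~2.1 of \emph{Tail bounds for sums of geometric and exponential variables}), so there is no in-paper argument to diverge from. Your Cram\'er--Chernoff plan is essentially Janson's own proof, and most of the skeleton checks out: the MGF formula, the admissible range of $s$, the optimizer $e^{s^*}=(\lambda-p_*)/(\lambda(1-p_*))$, the reduction of the optimized exponent to $-\mu\bigl(s^*(\lambda-p_*)-p_*\ln\lambda\bigr)$, and the final elementary inequality via $\ln(1+x)\ge x/(1+x)$ are all correct.

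The one genuine problem is the comparison lemma as displayed. Writing $M_p(s)$ for the MGF of $\mathcal{G}(p)$, what your ``collapse'' step needs is
\[
\ln M_{p_i}(s)\;\le\;\frac{p_*}{p_i}\,\ln M_{p_*}(s),
\qquad\text{equivalently}\qquad
p\,\ln M_p(s)\;\le\;p_*\,\ln M_{p_*}(s)\ \text{ for } p\ge p_*,
\]
i.e.\ the log-MGF \emph{divided by} the mean $1/p$ is largest at $p_*$. Your displayed inequality instead compares $\tfrac1p\ln M_p(s)$ with $\tfrac1{p_*}\ln M_{p_*}(s)$, which multiplies by the mean; that version is true but trivially so (both $1/p$ and $\ln M_p(s)$ are nonnegative and decreasing in $p$) and too weak: summed against $\sum_i 1/p_i=\mu$ it does not bound $\sum_i\ln M_{p_i}(s)$ by $p_*\mu\ln M_{p_*}(s)$, so the product does not collapse as claimed. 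Your verbal gloss (``per-unit-mean log-MGF is worst at $p=p_*$'') describes the correct statement, so this is a slip in the display rather than a conceptual gap, but as written the argument does not go through. Once corrected, you can also avoid the ``delicate'' derivative in $p$ entirely: with $u=1-e^{-s}$ one gets $M_p(s)=(1-u/p)^{-1}$, hence $p\ln M_p(s)=u\cdot\frac{-\ln(1-y)}{y}$ with $y=u/p$, and $y\mapsto -\ln(1-y)/y$ is increasing on $(0,1)$ by convexity of $-\ln(1-y)$; so $p\ln M_p(s)$ is decreasing in $p$ as required. (Do also record the admissibility check $1<e^{s^*}<1/(1-p_*)$, which holds for $\lambda>1$; the case $\lambda=1$ is trivial.)
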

\begin{theorem}[{see~\cite[Theorem~3.1]{janson2018tail}}] \label{thm:jansen31}
Let $X_1,...,X_n$ denote independent random variables with $X_i\sim\mathcal{G}(p_i)$ for $i=1,...,n$ and $0<p_i\le1$, and set $X\defeq X_1+...+X_n$, $\mu\defeq\mathbb{E}(X)$, and $p_*\defeq\min_ip_i$. Then, for any $\lambda\le1$
\[\mathbb{P}(X\le\lambda\mu)\le e^{-p_*\mu(\lambda-1-\ln\lambda)}\]
\end{theorem}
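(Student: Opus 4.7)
The plan is a standard Chernoff--Markov argument tailored to geometric summands, reducing to an exponential-type Laplace transform and then collapsing the heterogeneous rates $p_1,\dots,p_n$ into $p_\ast$ and $\mu$. For any $t \ge 0$, applying Markov's inequality to the decreasing map $x \mapsto e^{-tx}$ gives, by independence,
\[
\mathbb{P}(X \le \lambda\mu) \;=\; \mathbb{P}(e^{-tX} \ge e^{-t\lambda\mu}) \;\le\; e^{t\lambda\mu}\prod_{i=1}^n \mathbb{E}(e^{-tX_i}).
\]
A direct computation gives $\mathbb{E}(e^{-tX_i}) = p_i e^{-t}/(1-(1-p_i)e^{-t})$, which I would first simplify to the cleaner bound $\mathbb{E}(e^{-tX_i}) \le p_i/(p_i+t)$; after cross-multiplying and cancelling, this reduces to the elementary inequality $1+t \le e^t$.

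The core step is then to lower-bound $\sum_i \ln(1+t/p_i)$ in terms of $p_\ast$ and $\mu$. Here I would use the scalar fact that $g(p) \defeq p\ln(1+t/p)$ is nondecreasing in $p > 0$: its derivative equals $\ln(1+t/p) - t/(p+t)$, which is nonnegative by the elementary bound $\ln(1+x) \ge x/(1+x)$. Consequently $p_i\ln(1+t/p_i) \ge p_\ast\ln(1+t/p_\ast)$ for every $i$; dividing by $p_i$ and summing yields
\[
\sum_{i=1}^n \ln\bigl(1+t/p_i\bigr) \;\ge\; p_\ast\mu\,\ln\bigl(1+t/p_\ast\bigr),
\]
where I have used $\mu = \sum_i 1/p_i$. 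Plugging this back into the Chernoff bound and substituting $u \defeq t/p_\ast$ yields $\mathbb{P}(X\le\lambda\mu) \le \exp\bigl(p_\ast\mu\,[u\lambda - \ln(1+u)]\bigr)$. A one-variable minimisation gives the optimiser $u = (1-\lambda)/\lambda$, which is nonnegative exactly because $\lambda \le 1$, and at this $u$ the bracket evaluates to $1-\lambda+\ln\lambda = -(\lambda - 1 - \ln\lambda)$, matching the claim.

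The main obstacle is identifying the correct monotonicity inequality for $g(p) = p\ln(1+t/p)$ in the second paragraph: it is exactly the scalar fact that lets the heterogeneous rates collapse to $p_\ast$ while preserving the factor $\mu$ in the exponent. Everything else is routine Chernoff bookkeeping and a one-variable calculus problem, with the symmetric upper-tail statement (Theorem~\ref{thm:jansen21}) proved by the dual argument using $e^{+tX}$ with $t > 0$.
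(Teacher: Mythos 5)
Your proof is correct. Note that the paper does not prove this statement at all --- it is quoted verbatim from Janson's paper on tail bounds for sums of geometric variables --- so there is no internal proof to compare against; your argument is essentially the one in the cited source: the Chernoff step with $\mathbb{E}(e^{-tX_i})=p_i/(e^t-1+p_i)\le p_i/(p_i+t)$ via $1+t\le e^t$, the collapse of the heterogeneous rates through the monotonicity of $p\mapsto p\ln(1+t/p)$ (equivalently $\ln(1+x)\ge x/(1+x)$), and the one-variable optimisation at $u=(1-\lambda)/\lambda$, which is where the hypothesis $\lambda\le 1$ (implicitly $0<\lambda\le 1$, since $\ln\lambda$ must be defined) enters. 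All steps check out.
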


In the following lemma we use the first of these bounds to show that the sum $X_1+...+X_n$ with $X_i\sim\mathcal{G}(i/n)$ is in $\Omega(n\log n)$ with high probability. This is the same analysis as used for the coupon collector's problem.

\restateharmonictail*
\begin{proof}
First, note that $\mathbb{E}(X)=nH_n$, where $H_n:=1/1+...+1/n$ is the $n$-th harmonic number. Using Theorem~\ref{thm:jansen21} with $\lambda$ chosen s.t.\ $\lambda-1-\ln\lambda=k$ we get
\[\mathbb{P}(X\ge \lambda nH_n)\le e^{-1/n\,\cdot\,nH_nk}=e^{-kH_n}\]
We have $\ln n\le H_n\le 1+\ln n \le 2\ln n$, for $n\ge 3$, which yields the desired bound when choosing $l:=2\lambda$.
\end{proof}

\section{Proof of Lemma~\ref{lem:simulateclockwell}: Arbitrarily Good Clocks}\label{app:simulateclockwell}
\restatesimulateclockwell*
\begin{proof}
To simplify notation, we will use the precise constants proved in Lemma \ref{lem:simulateclock} here, although the proof does not require them.

Choose $l\defeq 28k^2$ as the number of clocks we want to run in sequence and let $(Q',\delta')$ denote the BP from Lemma~\ref{lem:simulateclock}. We construct our BP $\PP$ as $(Q'\times\{1,...,l\},\delta)$, with global states $\{1,...,l\}$ and transitions $\delta$ as follows.
\begin{align}
& (q,i)\mapsto(r,i),\,f\qquad&&\text{for $\delta(q)=(r,f),i=1,...,l$}\tag{$\alpha$} \\
& (1,i)\mapsto(0,i+1),\{1\mapsto0\}&&\text{for $i=1,...,l-1$} \tag{$\beta$}
\end{align}
Additionally, we identify state $(0,0)$ with $0$ and $(1,l)$ with $1$.

For the analysis, we are going to divide the $l$ clocks into groups of $2k$. As shown in Lemma \ref{lem:simulateclock}, each clock has at most a $2n^{-1/2}$ probability of “failing”, i.e.\ taking fewer than $n/14\cdot\ln n$ steps. So the probability that all $2k$ clocks in a group fail is at most $2^{2k}n^{-k}$.

By union bound, the probability that there exists a group which has all of its clocks failing is at most $14\cdot 2^{2k}kn^{-k}$. Conversely, if each group has a single non-failing clock, we take at least $kn\ln n$ steps in total.
\end{proof}

\section{Proof of Lemma~\ref{lem:cminpolybcp}: Simulating CMs efficiently}\label{app:cminpolybcp}
\restatecminpolybcp*
\begin{proof}
Let $A=(S,\mathcal{T}_1,\mathcal{T}_2)$ denote the CM. We will construct our BCP assuming to work with $ln$ agents instead of $n$, as each agent can simulate a constant number of agents. Additionally, we will ignore errors for now, and modify the BCP later to deal with them.

So let $\PP=(Q\times G,\Sigma,\delta,I,O)$ denote our BCP with global states $G$, where we set $\Sigma\defeq\{1,...,k\}$. We want to use Lemma~\ref{lem:simulatewell} to simulate the steps of the CM, so let $\PP'=(Q'\times G',\delta')$ denote a BP from that Lemma s.t.\ $\PP'$ reaches a final configuration with probability $1-\mathcal{O}(n^{-r-1})$, where $r$ is chosen s.t.\ $f\in\mathcal{O}(n^r)$.

We use local states $Q\defeq\{1_1,...,1_k\}\cup Q'$, where $C(1_i)$ represents the value of counter $i\in\{1,...,l\}$ of the CM in a configuration $C$. As global states, we have $G\defeq (G'\cup\Op{init})\times S$. Note that we have states $0,1\in Q'$ as well, with $1$ representing a “working register” in the same manner as the other counters. Agents not belonging to a counter are in state $0$. This means that the initial states are simply given by $I(i)\defeq1_i$ for $i\in\{1,...k\}$.

To perform a step of the CM, we load the affected counter into the “working register” represented by state $1\in Q'$ and let $\PP'$ run. After it has terminated, we write the value back into the original counter and move to the next state. We will now describe these transitions formally, so let $\mathcal{T}_\circ(s)=(i,j,s_0',s_1')$ denote a transition of the CM, with $\circ\in\{1,2\}$, $s\in S$. To initialise the working register, we need
\begin{align}
\begin{alignedat}{2}
&(1_i,\Op{init},s)&&\mapsto(1,j,s),\,\{1_i\mapsto 1\} \\
&(0,\Op{init},s)&&\mapsto(0,j,s),\,\{1_i\mapsto 1\}
\end{alignedat}\tag{init}
\end{align}
Then we faithfully execute the transitions of $\PP'$.
\begin{align}
(q,j,s)\mapsto(q',j',s),\,f&\qquad\text{for $\delta'(q,j)=((q',j'),f)$}\tag{run}
\end{align}
Once $\PP'$ reaches a final state, we move to the next state of $A$.
\begin{align}
\begin{alignedat}{2}
&(1,\Op{done}_b,s)\mapsto(1_i,\Op{init},s_b'),&&\,\{1\mapsto 1_i\} \\
&(0,\Op{done}_b,s)\mapsto(0,\Op{init},s_b'),&&\,\{1\mapsto 1_i\}
\end{alignedat}
&\qquad\text{for $b\in\{0,1\}$}\tag{finish}
\end{align}

Note that we have two outgoing (init) transitions, one for each $\circ$. Matching the execution of CMs, we pick one of these uniformly at random, using our construction from Section~\ref{sec:nondeterministic}. We also remark that the BP $\PP'$ is only guaranteed to simulate the CM if the counter values do not exceed $n$. This is ensured by $A$ being bounded.

We know that $A$ eventually reaches either state $0$ or $1$, which determines the result of its computation. So we define the accepting states of $\PP$ based on that as $O\defeq\{(q,j,s)\in Q\times G:s=1\}$.

As mentioned, the above disregards error conditions. When executing $\PP'$ via transition (run) it might end up in a failing configuration, meaning that at least one agent has local state $\perp$. There are no transitions which would cause an agent to leave state $\perp$ (in particular, recall that the set of failing configurations is closed under reachability for $\PP'$). Hence any agent with local state $\perp$ remains so, and we can modify $\PP$ to cause such an agent to initiate a reset. For this we have each agent remember its initial state, and state $\perp$ sends a broadcast which reverts each agent to its initial state, effectively restarting the computation.

Formally, we do this transformation (as well as simulating $l$ agents by a single one) as follows. Let $\PP^*=(Q^*,\Sigma,\delta^*,I^*,O^*)$ denote the new BCP. We use states $Q^*\defeq (Q\times G)^{l+1}$, where the first component stores the initial state and the latter $l$ components simulate $l$ agents. The input mapping is $I(i)\defeq(1_i,1_i,0,...,0)$ for $i\in\{1,...,k\}$, and the accepting states are $O^*\defeq\{(q_0,...,q_l)\in Q^*:q_1\in O\}$.

For $i\in\{1,...,l\},q\in Q\times G$ we execute a transition $\delta(q_i)=(q'_i,f)$ of $\PP$ as
\begin{align*}
&(q_0,...,q_l)\mapsto(q_0,q'_1,...,q'_l),\,\{(r_0,...,r_l)\mapsto (r_0,f(r_1),...,f(r_l))\}
\end{align*}
where $q'_m\defeq f(q_m)$ for $m\in\{1,...,l\}\setminus\{i\}$. To perform the resets, we add a transition for each $(q_0,...,q_l)\in Q^*$ with $q_i=\perp$ for some $i$.
\begin{align*}
&(q_0,...,q_l)\mapsto(q_0,q_0,0,...,0),\,\{(r_0,...,r_l)\mapsto(r_0,r_0,0,...,0)\}
\end{align*}

We execute one step of the CM every $\mathcal{O}(n\log n)$ interactions, in expectation. If no failure occurs, the BCP stabilises after $\mathcal{O}(n^{r+1}\log n)\subseteq\mathcal{O}(f(n)\,n\log n)$ steps, in expectation. A single step can fail with probability $\mathcal{O}(n^{-r-1})$, so the probability that all steps succeed is at least $1-\mathcal{O}(n^{-1})$. If a step fails, the computation will restart (after $\mathcal{O}(n)$ steps in expectation), increasing the expected running time by a factor of $1/(1-\mathcal{O}(n^{-1}))\in\mathcal{O}(1)$.
\end{proof}

\end{document}